\newcommand{\single}{\renewcommand{\baselinestretch}{1}\large\normalsize}
\newcommand{\double}{\renewcommand{\baselinestretch}{1}\large\normalsize}
\newcommand{\real}{\mathbb{R}}
\newcommand{\norm}[1]{\lVert#1\rVert} 
\newcommand{\expectv}{\mathbb{E}} 
\newcommand{\indicator}[1]{\mathbf{1}_{#1}} 
\newcommand{\probmeasureHzero}{P} 
\newcommand{\probmeasureHone}{Q} 
\newcommand{\disHzero}{p}
\newcommand{\disHone}{q} 
\newcommand{\genempir}[1]{\nu_{\noobs}(#1)} 
\newcommand{\ddimen}{d} 
\newcommand{\asize}{L} 
\newcommand{\noobs}{n} 
\newcommand{\partindex}{i} 
\newcommand{\coeff}{c} 
\newcommand{\lbound}{m} 
\newcommand{\ubound}{M} 
\newcommand{\covcomp}{\alpha} 
\newcommand{\compconst}{A} 
\newcommand{\marp}{\kappa} 
\newcommand{\marconst}{K} 
\newcommand{\denlbound}{c} 
\newcommand{\denubound}{C} 
\newcommand{\RDPdepth}{J} 
\newcommand{\compp}{\beta} 
\newcommand{\aindex}{i} 
\newcommand{\obsindex}{i} 
\newcommand{\mapc}{\gamma} 
\newcommand{\mapcopt}{\psi^{*}} 
\newcommand{\mapout}{\phi} 
\newcommand{\mapsopt}{\mapout^{*}} 
\newcommand{\mapempest}{\widehat{\mapout}_{\noobs}} 
\newcommand{\PCclass}{\textnormal{PC}(\beta,m,M,L)} 
\newcommand{\candclass}{\Phi} 
\newcommand{\region}{R} 
\newcommand{\optcell}{A^{*}} 
\newcommand{\cell}{S} 
\newcommand{\partition}{\pi} 
\newcommand{\partclass}{\Pi_{\region}} 
\newcommand{\abbrefdiv}[1]{D(#1)} 
\newcommand{\empirfdiv}[1]{D_{\noobs}(#1)} 
\newcommand{\kl}[2]{D_{\text{\slshape{KL}}}(#1\Vert#2)}
\newcommand{\inputs}{x} 
\newcommand{\inputsv}{x} 
\newcommand{\inputrv}{X} 
\newcommand{\auxrv}{Z}
\newtheorem{theorem}{Theorem}
\newtheorem*{defin}{Definition}
\newtheorem{lemma}{Lemma}
\numberwithin{example}{section}
\title{\vspace*{-20pt}\textbf{Quantization via Empirical Divergence Maximization}}
\author{Michael~A.~Lexa\\ \url{amlexa@gmail.com}}
\date{5 Nov 2011, Revised 30 Apr 2012}
\begin{document}
\sloppy\double
\maketitle

\begin{abstract}
Empirical divergence maximization (EDM) refers to a recently proposed strategy for estimating $f$-divergences and likelihood ratio functions.
This paper extends the idea to empirical vector quantization where one seeks to empirically derive quantization rules that maximize the Kullback-Leibler divergence between two statistical hypotheses.
We analyze the estimator's error convergence rate leveraging Tsybakov's margin condition and show that rates as fast as $n^{-1}$ are possible, where $n$ equals the number of training samples.
We also show that the Flynn and Gray algorithm can be used to efficiently compute EDM estimates and show that they can be efficiently and accurately represented by recursive dyadic partitions.
The EDM formulation have several advantages.
First, the formulation gives access to the tools and results of empirical process theory that quantify the estimator's error convergence rate.
Second, the formulation provides a previously unknown derivation for the Flynn and Gray algorithm.
Third, the flexibility it affords allows one to avoid a small-cell assumption common in other approaches.
Finally, we illustrate the potential use of the method through an example.
\end{abstract}

\section{Introduction}\label{sect:intro}
In statistical learning theory, empirical risk minimization is a standard technique whereby classifiers are formed from empirical data~\cite{devroye-gyorfi-lugosi1996}.
The idea is simple enough: when the underlying probability distributions characterizing the data are unknown, classifiers are found by minimizing an empirical form of the risk (probability of error) over some specified class of classifiers.
The technique is well-understood and has been generalized to include various cost criteria and problem settings.
In its generalized form, empirical risk minimization is sometimes referred to as $M$-estimation (the $M$ standing for minimization or maximization)~\cite{vandegeer2007}.

Recently, Nguyen, Wainwright and Jordan~\cite{nguyen_wainwright_jordan2010} applied $M$-estimation to the estimation of $f$-divergences (the Kullback-Leibler (KL) divergence~\cite{kullback1959} in particular) and to bounded likelihood ratio functions.
In this paper, we build on their ideas and develop a method for computing empirical quantization rules by maximizing the KL divergence.
We call the method \emph{empirical divergence maximization} (EDM) in deference to its similarity to empirical risk minimization and because the name is simple and descriptive.
The proposed formulation leads to an entirely different algorithm for computing the estimators than that employed in~\cite{nguyen_wainwright_jordan2010}, and the convergence rates reported here incorporate a margin condition not included in~\cite{nguyen_wainwright_jordan2010} that shows when fast convergence is possible.

As the name suggests, the criterion used in EDM is the KL divergence, a well-known information theoretic quantity that has enjoyed a prominent and long-standing place in both theory and practice.
Applications are numerous and range from detection and estimation problems~\cite{poor_thomas1977,chamberland_veeravalli2003,lhler_etal2004} to texture retrieval in image databases~\cite{do_vetterli2002} and from the study of neural coding~\cite{johnson_gruner2001} to linguistic problems~\cite{cai_kulkarni_verdu}.
Roughly speaking, the KL divergence quantifies the dissimilarity of two probability density functions (pdfs) and is therefore often regarded as a ``distance'', although it is not a distance metric.
Stein's Lemma~\cite[p. 309] {cover_thomas1991} fundamentally links the divergence to hypothesis testing by relating it to the decay rate of different error probabilities. 
In fact, the divergence equals the optimal asymptotic error decay rate of a Neyman-Pearson test.
Thus increasing the divergence between two statistical hypotheses generally increases their discriminability.

The use of the KL divergence in quantization problems dates back nearly four decades~\cite{poor_thomas1977}.
In that time, various problem settings have been investigated including scalar, vector, and distributed quantization~\cite{poor_thomas1977,kassam1988,longo_lookabaugh_gray1990,gupta_hero2003}.
Until recently, however, most results addressing this type of quantization assumed full knowledge of the probability distributions of interest and did not explicitly address empirical designs.
Moreover, those works in the quantization literature most closely related to the present paper~\cite{gupta_hero2003,lazebnik_raginsky2009} invoke a small-cell assumption that forces partitions, designed to maximize the divergence, to resemble nearest neighbor partitions even when such partitions cannot well-approximate theoretically optimal partitions (see Fig.~\ref{fig:lr}). 
Because of its flexibility, however, the EDM approach overcomes this shortcoming.

In~\cite{lazebnik_raginsky2009}, Lazebnik and Raginsky study a conceptually similar quantization problem to the one considered here, but the differences between the approaches are substantial.
For example, their information loss criterion is a difference of mutual informations, and while related to the KL divergence, this criterion measures a different quantity than the divergence loss studied here.
Their work is also placed in a machine learning setting where the data and the quantization values (labels) are jointly distributed and both play integral roles in their information criterion.
In this paper, the quantization values play a secondary role in the computation of our estimator.

To formalize the problem, let $\probmeasureHzero\text{~and~}\probmeasureHone$ be two probability measures defined on the probability space $([0,1]^{\ddimen},\mathcal{B})$, where $\mathcal{B}$ denotes the usual Borel $\sigma$-algebra and $d\geq1$.
Let $\disHzero$ and $\disHone$ denote the density functions of  $\probmeasureHzero\text{~and~}\probmeasureHone$ with respect to Lebesgue measure and assume $P$ and $Q$ are absolutely continuous with respect to one another.
Then any quantization rule $\mapc:\real^{\ddimen} \mapsto \{0,\dotsc,\asize-1\}$ that operates on a random vector $X$ (distributed according to $\probmeasureHzero$ or $\probmeasureHone$) induces the probability mass functions (pmfs),
 $\disHzero(\mapc)=(\disHzero_{0}(\mapc),\dotsc,\disHzero_{\asize-1}(\mapc))$ and $\disHone(\mapc)=(\disHone_{0}(\mapout),\dotsc,\disHone_{\asize-1}(\mapc))$, where $\disHzero_{\aindex}(\mapc)=\probmeasureHzero(\mapc(X)=\aindex)$ and similarly for $\disHone_{\aindex}(\mapc)$.
In this context, the KL divergence is defined as
\begin{equation*}
\kl{\disHzero(\mapc)}{\disHone(\mapc)}:=\sum_{\aindex=0}^{\asize-1} -\disHzero_{\aindex}(\mapc)\log{\bigg(\frac{\disHone_{\aindex}(\mapc)}{\disHzero_{\aindex}(\mapc)}\bigg)}.
\end{equation*}
In EDM, we maximize an empirical form of the KL divergence over some given class of quantization rules.
We therefore analyze an estimator of the form
\begin{equation*}
\widehat{\gamma}_{n}=\underset{\mapc\in\Gamma}{\arg\max} ~\empirfdiv{\mapc},
\end{equation*}
where $\empirfdiv{\mapc}$ represents an empirical KL divergence that is defined in Section~\ref{sec:EDM} (the subscript $n$ signifies that it is an empirical quantity that is based on $n$ samples from both $P$ and $Q$) and where $\Gamma$ denotes some class of quantization rules.
By design EDM estimators constructed rules $\widehat{\gamma}_{n}$ that induce maximally divergent pmfs, thereby best preserving the discriminability of $P$ and $Q$.
In other words, EDM estimators maximize the performance (in terms of KL divergence) of any downstream detector or classifier that operates on the quantized data.
The EDM formulation has several advantages:  (i) it readily permits the application of empirical process theory which in turn provides the tools to quantify the estimator's error decay rates; (ii) it naturally leads to the Flynn and Gray algorithm which efficiently computes the quantization rules; (iii) it provides a systematic derivation for the Flynn and Gray algorithm; and (iv) the flexibility in candidate function classes allows efficient representation of the quantization rules and overcomes the small-cell constraint.

\section{Empirical Divergence Maximization}\label{sec:EDM}
The form of $\empirfdiv{\mapc}$ is taken from recent work by Nguyen et al.~\cite{nguyen_wainwright_jordan2010} and relies on rewriting the convex function $-\log(\cdot)$ appearing in the definition of the KL divergence.
Throughout the paper, we hold the number of quantization levels $L$ fixed.

\subsection{Expressing divergence using convex conjugates}
The notion of a \emph{convex conjugate} is based on the observation that a curve can either be described by its graph or by an envelope of tangents~\cite{rockafellar1970}.
More concretely, a (closed) convex function $f:\real\mapsto\real$ can be described as the pointwise supremum of a collection of affine functions $h(t)=tt^*-\mu^*$ such that the set of all pairs $(t^*,\mu^*)$ lie within the epigraph of its convex conjugate $f^*(t^*)$, i.e., 
\begin{equation}\label{equ:convexdual}
f(t)= \sup_{t^*}~\{t^*t-f^*(t^*)\}
\end{equation}
where by duality the convex conjugate $f^*(t^*)$ of $f(t)$ is defined by
\begin{equation*}
f^*(t^*)= \sup_{t}~\{tt^*-f(t)\}.
\end{equation*}

Now suppose $\mapc$ is an arbitrary quantization rule defined on $[0,1]^d$,
\begin{equation}
\mapc(x) = \sum_{i=0}^{L-1} i \, \indicator{R_i}(x),~x\in[0,1]^d,
\end{equation}
where $\{\region_{\partindex}\}_{\partindex=0}^{\asize-1}$ is a collection of disjoint sets partitioning $[0,1]^d$ and $\indicator{R_i}(\cdot)$ denotes the indicator function.
Using~\eqref{equ:convexdual}, we can write the divergence between the pmfs induced by $\mapc$ as
\begin{subequations}
\begin{align}
\kl{\disHzero(\mapc)}{\disHone(\mapc)}&=\sum_{\aindex=0}^{\asize-1} \disHzero_{\aindex}(\mapc)f\bigg(\frac{\disHone_{\aindex}(\mapc)}{\disHzero_{\aindex}(\mapc)}\bigg) \label{equ:klconvexconj1} \\
&=\sum_{\aindex=0}^{\asize-1} \disHzero_{\aindex}(\mapc)\cdot
  \sup_{t^*}\Bigl\{t^* \frac{\disHone_{\aindex}(\mapc)}{\disHzero_{\aindex}(\mapc)} - f^*(t^*)\Bigl\}, \label{equ:klconvexconj2}
\end{align}
\end{subequations}
where $f(t)=-\log(t)$ for $t>0$ and $+\infty$ otherwise.
Calculating the convex conjugate, one finds
\begin{equation*}
f^*(t^*)=\begin{cases}-1-\log(-t^*)~~\text{if~~} t^*<0 \\
+\infty ~~\text{if~~} t^*\geq 0.
\end{cases}
\end{equation*}
Substituting this expression into~\eqref{equ:klconvexconj2}, we have the following expressions for the KL divergence
\begin{align*}
&\kl{\disHzero(\mapc)}{\disHone(\mapc)} \\
&\qquad =\sum_{\aindex=0}^{\asize-1} \disHzero_{\aindex}(\mapc)
  \sup_{t^*_{\partindex}\in\real^{-}}\Bigl\{t^*_{\partindex} \frac{\disHone_{\aindex}(\mapc)}{\disHzero_{\aindex}(\mapc)} +1 +\log(-t^*_{\partindex})\Bigl\} \nonumber\\
&\qquad = \sum_{\aindex=0}^{\asize-1} \disHzero_{\aindex}(\mapc)
  \negmedspace \sup_{\coeff_{\region_{\partindex}}\in\real^{+}}\Bigl\{\log(\coeff_{\region_{\partindex}}) - \coeff_{\region_{\partindex}} \frac{\disHone_{\aindex}(\mapc)}{\disHzero_{\aindex}(\mapc)} + 1\Bigl\}\nonumber\\
&\qquad = \negmedspace 1\negthinspace+\negmedspace\sum_{\aindex=0}^{\asize-1}
  \sup_{\coeff_{\region_{\partindex}}\in\real^{+}}\negmedspace\Bigl\{\negthinspace\probmeasureHzero(\region_{\partindex})\log(\coeff_{\region_{\partindex}}) - \coeff_{\region_{\partindex}}\, \probmeasureHone(\region_{\partindex})\negmedspace \Bigl\},
\end{align*}
where in the second step we let $\coeff_{\region_{\partindex}}=-t^*_{\partindex}$, and in the last step use the fact that
  $\disHzero_{\partindex}(\mapc)=\probmeasureHzero(\region_{\partindex}), \region_{\partindex}=\{x:\mapc(x)=\partindex\}$.
The validity of last expression is easily verified by differentiating it with respect to $\coeff_{\region_{\partindex}}$ and solving for the maximizers.
By defining the piecewise constant function
\begin{equation}\label{equ:piecerule}
\mapout(x):=\sum_{\partindex=0}^{\asize-1}\coeff_{\region_{\partindex}} \indicator{\region_{\partindex}}(x),\quad \coeff_{\region_{\partindex}}\in\real^{+}, x\in[0,1]^d
\end{equation}
we can write $\kl{\disHzero(\mapc)}{\disHone(\mapc)}$ in integral form:
\begin{equation}\label{equ:convexconjkl}
1 + \sup_{\mapout} \left\{\int_{[0,1]^{\ddimen}} \log(\mapout)~d\probmeasureHzero - \int_{[0,1]^{\ddimen}} \mapout~d\probmeasureHone \right\},
\end{equation}
where the supremum is taken over all functions of the form~\eqref{equ:piecerule}.
Note that the $\phi$ which achieves the supremum depends on $P$, $Q$, and $\{R_i\}_{i=0}^{L-1}$.
Below, we restrict $\phi$ to lie within a (more) specific class of rules and define the proposed quantization rule estimator in terms of the empirical counterpart to~\eqref{equ:convexconjkl}.

In addition, note that unlike $\gamma$, $\phi$ does not map $[0,1]^{d}$ to a set of indices.
We nevertheless refer to both as quantization rules since $\phi$ only assumes $L$ real values.
Note also that in terms of KL divergence, $\phi$ determines $\gamma$, i.e., if $\phi$ is known, a quantization rule $\gamma: [0,1]^{d}\mapsto \{0,\dotsc,L-1\}$ can be defined that induces the same pmfs as $\phi$.
This fact becomes important for the algorithm described in Section~\ref{sect:algorithm}.

\subsection{Empirical estimator}\label{subsect:empest}
To define a function class for $\phi$, we first consider different ``labelings'' of a uniform partition of $[0,1]^d$.
For a given positive integer $J$, let $\pi_J$ denote a tesselation of $[0,1]^{d}$ by uniform hypercubes $S_k, k=0,\dotsc,2^{dJ}-1$.
To each cell $S_k$, we can associate one of $L$ labels $\{0,\dots,L-1\}$, and thus for each different labeling of $\pi_{J}$, we can define another partition, $\pi_{R}$, with cells $\{R_i\}_{i=0}^{L-1}$ described by
\begin{equation}
R_i=\bigcup_{k:\,\text{label}(S_k)=i} \negthickspace S_k, \quad i=0,\dotsc,L-1.
\end{equation}
Now, for a given partition $\pi_R$ and positive constants $\lbound>0$ and $\ubound<\infty$, denote by $\candclass_{\pi_{R}}(\asize,J,\lbound,\ubound)$ the set of all $L$-level piecewise constant functions defined on $\pi_{R}$ that are bounded and positive:
\begin{equation*}
\candclass_{\pi_{R}}(\asize,J,\lbound,\ubound)=\negthickspace \left\{\mapout(x)=\negthickspace \sum_{\partindex=0}^{\asize-1}\coeff_{\region_{\partindex}} \indicator{\region_{\partindex}}(x)\negthickspace : \lbound\leq\coeff_{\region_{\partindex}}\leq\ubound \negthinspace \right\}.
\end{equation*}
Letting $\Pi_{R}$ denote the set of all partitions $\pi_{R}$, or equivalently the set of all different labelings of $\pi_J$, we define the candidate class $\candclass(\asize, J,\lbound,\ubound)$ of our empirical quantizers as
\begin{equation}\label{equ:defcandclass2}
\candclass(\asize, J,\lbound,\ubound):= \bigcup_{\pi_R \in \Pi_R} \candclass_{\pi_{R}}(\asize,J,\lbound,\ubound).
\end{equation}

Letting $\{\inputrv_{\obsindex}^{\disHzero}\}_{\obsindex=1}^{\noobs}$ and $\{\inputrv_{\obsindex}^{\disHone}\}_{\obsindex=1}^{\noobs}$ be training data distributed according to $\disHzero$ and $\disHone$, respectively, we define the function $\empirfdiv{\mapout}$ as an empirical counterpart to~\eqref{equ:convexconjkl}
\begin{equation}\label{equ:empirKL}
\empirfdiv{\mapout}:=1+\frac{1}{\noobs}\sum_{\obsindex=1}^{\noobs} \log{\mapout(\inputrv_{\obsindex}^{\disHzero})}
 - \frac{1}{\noobs}\sum_{\obsindex=1}^{\noobs} \mapout(\inputrv_{\obsindex}^{\disHone})
\end{equation}
and define the proposed empirical quantization rule estimator as
\begin{equation}\label{equ:empirest2}
\mapempest:=\underset{\mapout\in\candclass(\asize,J,\lbound,\ubound)}{\arg\max} ~\empirfdiv{\mapout}.
\end{equation}
$\mapempest$ is our \emph{empirical divergence maximization} (EDM) estimator.
Note $\empirfdiv{\mapout}$ is not in general a KL divergence; it can in fact be negative for some ${\mapout\in\candclass}$.
It is a consistent estimator, however, converging to the ``best in class'' estimator as $\noobs\rightarrow \infty$~\cite{nguyen_wainwright_jordan2010}.

\subsection{Best in class and optimal quantization rules}
The \emph{best in class estimate} $\mapsopt$ is that element in $\candclass$ that maximizes $\abbrefdiv{\mapout}$,
\begin{equation}\begin{split}\label{equ:knownpq_KLest}
&\mapsopt:=\underset{\mapout\in\candclass(\asize,J,\lbound,\ubound)}{\arg\max} ~\abbrefdiv{\mapout},\quad \text{where} \\
\abbrefdiv{\mapout}&:=1 + \int_{[0,1]^{\ddimen}} \log(\mapout)~d\probmeasureHzero - \int_{[0,1]^{\ddimen}} \mapout~d\probmeasureHone.
\end{split}
\end{equation}
Note that $\abbrefdiv{\mapout}$, as opposed to $\empirfdiv{\mapout}$, is not an empirical quantity; its definition requires full knowledge of the distributions $\probmeasureHzero$ and $\probmeasureHone$.

We take the theoretically optimal quantization rule $\mapcopt$ to be the rule that maximizes the divergence over a class of piecewise constant functions that has an assumed boundary regularity (the regularity conditions play a role in the convergence analysis in Section~\ref{subsect:approxerror}).
The class definition uses the notion of a locally constant function: a function $f:[0,1]^{\ddimen}\mapsto\real$ is \emph{locally constant} at a point $\inputs\in[0,1]^{\ddimen}$ if there exists $\epsilon>0$ such that for all $y\in[0,1]^{\ddimen}$, the condition $\norm{\inputs-y}<\epsilon$ implies $f(y)=f(\inputs)$.
\begin{defin}[PC class~\cite{ruicastro_phd_thesis2007}]
A function $f:[0,1]^{\ddimen}\mapsto\{\coeff_{\partindex}\}_{\partindex=0}^{\asize-1}, \coeff_{\partindex}\in\real^{+}$ is a positive-valued piecewise constant function with $\asize$ levels if it is locally constant at any point $x\in[0,1]^{\ddimen}\setminus B(f)$, where $B(f)\subset [0,1]^{\ddimen}$ is a boundary set satisfying $N(r)\leq \beta r^{-(\ddimen-1)}$ for all $r>0$.
Here, $\beta>0$ is a constant and $N(r)$ is the minimal number of balls of diameter $r$ that covers $B(f)$.
Furthermore, let $f$ be uniformly bounded on $[0,1]^{\ddimen}$, that is $\lbound\leq f(x)\leq \ubound$ for all $x\in[0,1]^{\ddimen}$, where $\lbound>0$ and $\ubound<\infty$.
The set of all piecewise constant functions $f$ satisfying the above conditions is denoted by $\PCclass$.
\end{defin}
\noindent In short, we consider $\PCclass$ to be a class of likelihood-ratio quantization rules that have well behaved boundaries.
The theoretically optimal quantization rule is thus defined to be
\begin{equation}\label{equ:mapcopt}
\mapcopt:=\underset{\psi\in\PCclass}{\arg\max} ~\abbrefdiv{\psi}.
\end{equation}
It is well-known that $\mapcopt$ can always be constructed by thresholding the likelihood ratio~\cite{tsitsiklis1993a}.
In other words, the optimal quantization rule $\mapcopt$ can always be chosen to be a piecewise constant function whose boundary sets are level sets of the likelihood ratio $\disHone(x)/\disHzero(x)$.

\section{Solving for the estimator}\label{sect:algorithm}
To find $\mapempest$ in~\eqref{equ:empirest2}, we employ a modified form of the Flynn and Gray algorithm~\cite{flynn_gray1987} that iteratively maximizes the divergence between two pmfs over a set of quantization rules.
The method directly follows from the EDM formulation (although it was not originally proposed in this context) and searches for an optimal cell labeling for a given partition where the number of cells is much larger than the number of quantization levels.

\subsection{The Flynn and Gray algorithm}
For independent and identically distributed random variables $\inputrv_{1},\dotsc,\inputrv_{\noobs}$, the \emph{empirical measure} of a set ${A\in[0,1]^{\ddimen}}$, denoted $\probmeasureHzero_{\noobs}(A)$, is the sample average
\begin{equation}\label{equ:sample_average}
\probmeasureHzero_{\noobs}(A)=\frac{1}{\noobs}\sum_{k=1}^{\noobs} \indicator{A}(\inputrv_{k}).
\end{equation}
The sample average of a function $g:[0,1]^{\ddimen}\mapsto \real$ can thus be written with respect to $\probmeasureHzero_{n}$ as an \emph{empirical expectation},
\begin{equation}
\probmeasureHzero_{\noobs}(g)=\frac{1}{\noobs}\sum_{k=1}^{\noobs} g(\inputrv_{k})=\int g~d\probmeasureHzero_{\noobs}.
\end{equation}
Using this notation, we rewrite~\eqref{equ:empirKL} as
\begin{equation}\label{equ:empirKL1}
\empirfdiv{\mapout}=1 + \int_{[0,1]^{\ddimen}} \log(\mapout)~d\probmeasureHzero_{\noobs} - \int_{[0,1]^{\ddimen}} \mapout~d\probmeasureHone_{\noobs},
\end{equation}
where $\mapout\in\candclass$.
For any fixed partition $\partition_{\region}\in\partclass$, $\empirfdiv{\mapout}$ is maximized by assigning $\phi(x)$ the values $\probmeasureHzero_{\noobs}(\region_{i})/\probmeasureHone_{\noobs}(\region_{i})$ for $x\in R_i$.
For this assignment choice, $\empirfdiv{\mapout}$ can be expressed as
\begin{equation}\label{equ:flynn_gray_setup}
\empirfdiv{\mapout}=1 + \sum_{i=0}^{L-1} \int_{\region_{i}} \log\biggl(\frac{\probmeasureHzero_{\noobs}(\region_{i})}{\probmeasureHone_{\noobs}(\region_{i})}\biggr)~d\probmeasureHzero_{\noobs} - \int_{\region_{i}} \frac{\probmeasureHzero_{\noobs}(\region_{i})}{\probmeasureHone_{\noobs}(\region_{i})}~d\probmeasureHone_{\noobs},
\end{equation}
and the estimator $\mapempest$ can now be found by searching over $\Pi_R$ for the partition that maximizes~\eqref{equ:flynn_gray_setup}.
The Flynn and Gray algorithm~\cite{flynn_gray1987} is a straightforward method which accomplishes this task.
To apply it, we rewrite~\eqref{equ:flynn_gray_setup} as
\begin{align}\begin{split}
\empirfdiv{\mapout}&=\sum_{i=0}^{L-1} \probmeasureHzero_{\noobs}(\region_{i}) \left[\log\biggl(\frac{\probmeasureHzero_{\noobs}(\region_{i})}{\probmeasureHone_{\noobs}(\region_{i})}\biggr)+1\right] \\ &\qquad + \probmeasureHone_{\noobs}(\region_{i}) \biggl(-\frac{\probmeasureHzero_{\noobs}(\region_{i})}{\probmeasureHone_{\noobs}(\region_{i})}\biggr)\end{split} \\
&=\sum_{i=0}^{L-1} \probmeasureHzero_{\noobs}(\region_{i}) a_{i} + \probmeasureHone_{\noobs}(\region_{i}) b_{i} \\
&=\sum_{i=0}^{L-1} \sum_{k\in I_i} \probmeasureHzero_{\noobs}(S_k) a_{i} + \probmeasureHone_{\noobs}(S_k) b_{i},\label{equ:flynn_gray_coreeq}
\end{align}
where $a_{i}=\log\bigl(\probmeasureHzero_{\noobs}(\region_{i})/\probmeasureHone_{\noobs}(\region_{i})\bigr)+1$, $b_{i}=-\probmeasureHzero_{\noobs}(\region_{i})/\probmeasureHone_{\noobs}(\region_{i})$, and $I_i$ is the index set ${\{k:\text{label}(S_k)=i\}}$.
The algorithm maximizes $\empirfdiv{\mapout}$ by iterating two steps: it first holds the set of weights $\{a_i\}$ and $\{b_i\}$ fixed and finds the labels for each cell $S_k\in \pi_J, k=0,\dotsc,2^{dJ}-1$ that maximizes~\eqref{equ:flynn_gray_coreeq}, and then holds the cell labels of $\pi_J$ fixed and updates the weights $\{a_i\}$ and $\{b_i\}$ using the probabilities $\probmeasureHzero_{\noobs}(\region_{i}), \probmeasureHone_{\noobs}(\region_{i})$, $i=0,\dotsc,L-1$ found from the first step.
Flynn and Gray showed these steps monotonically increase~\eqref{equ:flynn_gray_coreeq}, and since $\empirfdiv{\mapout}$ is upper bounded by $1-m+\log{M}$ (follows from the boundedness of $\phi$), the algorithm converges to a local maximum.
The algorithm returns a locally optimal labeling of $\pi_{J}$ and locally optimal weights from which $\mapempest$ can be determined:
\begin{equation}
\mapempest(x)=-b_i \quad \text{for~}x\in R_i.
\end{equation}
The algorithm is outlined in the panel entitled Algorithm~\ref{algor:flynn_gray}.
An advantage of the Flynn and Gray algorithm is that it avoids the exhaustive combinatoric search over all possible labelings by only needing to examine each cell $S_k$ once per iteration.
From experiments, it has been observed that for moderate sized partitions $\pi_J$ ($<2^{16}$ cells) and for $L<10$, the algorithm converges very quickly ($<30$ iterations). 

EDM provides a new derivation for the Flynn and Gray algorithm; however, it is interesting to note that it can also be based on the fact that
\begin{equation}\label{graycomment}
\kl{p}{q}= \sup_{\gamma}~ \kl{\disHzero(\mapc)}{\disHone(\mapc)}
\end{equation}
where the supremum is over all measurable quantization rules with an arbitrary number of levels~\cite{gray1990,poor1988}.
Because $\kl{p}{q}\geq \kl{\disHzero(\mapc)}{\disHone(\mapc)}$ for any quantization rule, one could use~\eqref{graycomment} to justify an approach similar to EDM and maximize $\kl{\disHzero(\mapc)}{\disHone(\mapc)}$ over a set of quantization rules for a fixed quantization level.
While this approach leads to similar (if not identical) estimators, EDM has the advantage of making a clear connection with empirical process theory which provides the theoretical tools to analyze the error convergence rate.

Note also that the original Flynn and Gray algorithm does not explicitly constrain the values of $\phi$ to lie within the range $[m,M]$.
However, to avoid computing unbounded estimates at any given iteration, we employ the K-T technique~\cite{krichevsky-trofimov1981} when computing  $P_n(S_k)$ and $Q_n(S_k)$, ${k=0,\dotsc,2^{dJ}-1}$.
This technique simply preloads each cell $S_k$ by one half before calculating the sample averages, thereby avoiding the possibility of computing zero probability estimates $P_n(R_i)$, $Q_n(R_i)$.
Thus instead of~\eqref{equ:sample_average}, one computes
\begin{equation}
P_n(S_k) = \frac{1}{2^{dJ-1}+n}\left(\frac{1}{2}+\sum_{k=1}^{\noobs} \indicator{S_k}(X_{k}^{p})\right)
\end{equation}
and likewise for $Q_n(S_k)$. 
Here, the choice of $1/2$ is not arbitrary; it is based on theoretical considerations of what \emph{a priori} distribution of the probabilities $P(S_k)$, $Q(S_k)$ influences the sample averages $P_n(S_k)$, $Q_n(S_k)$ the least~\cite{krichevsky-trofimov1981}.

\begin{algorithm}
\caption{Modified Flynn and Gray algorithm}
\label{algor:flynn_gray}
\double
\begin{algorithmic}[1]
\REQUIRE \parbox[t]{7cm}{L, J, n, empirical cell probabilities $P_n(S_k)$ and $Q_n(S_k)$, stopping threshold $\epsilon$ \vspace*{.2cm}}
\STATE Initialize iteration index $l=0$
\STATE Randomly label cells $S_k\in\pi_{J}$
\STATE Compute $P_n^{(l)}(R_i), Q_n^{(l)}(R_i),~i=0,\dotsc,L-1$
\STATE Initialize weights \\ $a^{(l)}_i=\log\bigl(P_n^{(l)}(R_i)/Q_n^{(l)}(R_i)\bigr)+1$ \\ $b^{(l)}_i=-P_n^{(l)}(R_i)/Q_n^{(l)}(R_i)$
\STATE Compute $D^{(l)}_n(\phi)$
\STATE Initialize intermediary divergence $\widetilde{D}_n(\phi)=0$
\WHILE{$(D^{(l)}_n(\phi)-\widetilde{D}_n(\phi))/D^{(l)}_n(\phi)>\epsilon$}
\STATE Find new label for each cell $S_k$ by computing \\ 
$i^{l+1}=\arg\max_{i\in\{0,\dotsc,L-1\}} P_n(S_k) a^{(l)}_{i} + Q_n(S_k) b^{(l)}_{i}$ \\ (hold weights fixed)
\STATE Update probabilities for $i=0,\dotsc,L-1$ \\
$P_n^{(l+1)}(R_i)=\sum_{k:\text{label}(S_k)=i} P_n(S_k)$ \\
$Q_n^{(l+1)}(R_i)=\sum_{k:\text{label}(S_k)=i} Q_n(S_k)$ \\
(includes K-T preloading if necessary)
\STATE Compute intermediary divergence \\
$\widetilde{D}_n(\phi)=\sum_{i=0}^{L-1} P_n^{(l+1)}(R_i) a^{(l)}_{i} + Q_n^{(l+1)}(R_i) b^{(l)}_{i}$
\STATE Update weights \\
$a^{(l+1)}_i=\log\bigl(P_n^{(l+1)}(R_i)/Q_n^{(l+1)}(R_i)\bigr)+1$ \\ $b^{(l+1)}_i=-P_n^{(l+1)}(R_i)/Q_n^{(l+1)}(R_i)$
\STATE Compute new divergence \\
$D^{(l+1)}_n(\phi)=\sum_{i=0}^{L-1} P_n^{(l+1)}(R_i) a^{(l+1)}_{i} + Q_n^{(l+1)}(R_i) b^{(l+1)}_{i}$
\STATE $l=l+1$
\ENDWHILE
\ENSURE \parbox[t]{7cm}{$\mapempest$ (locally optimal labels of $\pi_J$ and weights $\{a_i,b_i\}$), $D_n(\mapempest)$}
\end{algorithmic}
\end{algorithm}

In short, one solves for an EDM estimator based upon the training data $\{\inputrv_{\obsindex}^{\disHzero}\}, \{\inputrv_{\obsindex}^{\disHone}\}$ by first computing the sample averages $P_n(S_k), Q_n(S_k)$ for each cell $S_k\in \pi_{J}$ and then providing these probabilities as input to the Flynn and Gray algorithm.
The algorithm is applied to two numerical examples in Section~\ref{sect:app}.

\subsection{Recursive dyadic partitions}\label{subsect:RDP}
Because $\Phi$ is based on a uniform dyadic partition, any EDM estimate $\mapempest$ can be viewed as a piecewise constant function supported on a \emph{recursive dyadic partition} (RDP).
RDPs are a systematic class of partitions that have proven to be effective in function estimation and classification problems~\cite{ruicastro_phd_thesis2007,scott_nowak2006}.
Their usefulness stems from their ability to adapt to boundaries (including $\PCclass$), thus allowing efficient computation of estimators and concise encoding of estimates.
In the present context, RDPs are important because they allow efficient encoding of $\mapempest$, and their properties are key in the approximation error analysis presented in Section~\ref{subsect:approxerror}.

RDPs are partitions composed of quasi-disjoint sets%
\footnote{Two sets are quasi-disjoint if and only if their intersection has Lebesgue measure zero.} 
whose union equals the entire space $[0,1]^{\ddimen}$.
A RDP is any partition that can be constructed using only the following rules~\cite{ruicastro_phd_thesis2007}:
\begin{enumerate}
\item $\{[0,1]^{\ddimen}\}$ is a RDP.
\item Let $\partition=\{\cell_{0},\ldots,\cell_{k-1}\}$ be a RDP, where $\cell_{\partindex}=[u_{\partindex 1},v_{\partindex 1}]\times\ldots\times [u_{\partindex \ddimen},v_{\partindex \ddimen}]$.
Then 
\begin{equation*}
\partition'= \{\cell_{0},\dotsc, \cell_{\partindex-1}, \cell_{\partindex}^{0},\dotsc, \cell_{\partindex}^{(2^{\ddimen}-1)}, \cell_{\partindex+1},\ldots,\cell_{k-1}\}
\end{equation*} 
is a RDP, where
 $\{\cell_{\partindex}^{0},\dotsc,\cell_{\partindex}^{(2^{\ddimen}-1)}\}$ is obtained by dividing the hypercube $\cell_{\partindex}$ into $2^{\ddimen}$ quasi-disjoint hypercubes of equal size.
Formally, let $q\in\{0,\dotsc,2^{\ddimen-1}\}$ and $q=q_1q_2\dotsc q_{\ddimen}$ by the binary representation of $q$.
Then
\end{enumerate}
\begin{align*}
\cell_{\partindex}^{(q)}=&\Biggl[u_{\partindex 1}+\frac{v_{\partindex 1}-u_{\partindex 1}}{2}q_{1},
  v_{\partindex 1}+\frac{u_{\partindex 1}-v_{\partindex 1}}{2}(1-q_{1})\Biggr] \times \\
&\ldots \times \Biggl[u_{\partindex \ddimen}+\frac{v_{\partindex \ddimen}-u_{\partindex \ddimen}}{2}q_{\ddimen},
  v_{\partindex \ddimen}+\frac{u_{\partindex \ddimen}-v_{\partindex \ddimen}}{2}(1-q_{\ddimen})\Biggr].
\end{align*}
We say a RDP has maximal depth $J$ if the side length of its smallest hypercube equals $2^{-J}$.
Fig.~\ref{fig:RDP} illustrates a RDP approximating an elliptical boundary.
\begin{figure}
\centerline{\includegraphics[width=5cm]{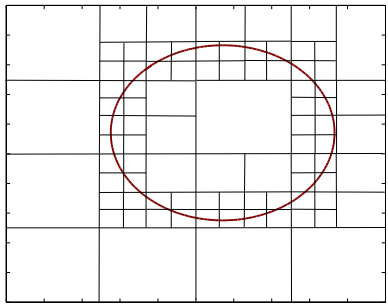}}
\caption{\small An example two-dimensional RDP ($\RDPdepth=3$).}
\label{fig:RDP}
\end{figure}

It should be clear RDPs describe tree structures where the root node is the entire space $[0,1]^d$ and the leaf nodes represent the different cells comprising the RDP.
Each branch can have different depths and thus the cells can have different sizes.
This property allows a RDP to have larger cells in locations where the function value is constant and smaller cells where the values change (around boundaries). 
The combination of the systematic tree structure and the partition's adaptivity allow the estimator to be efficiently encoded, that is, the number of bits necessary to map an observation to its quantized value can be done efficiently~\cite{scott_nowak2006}.

For a fixed estimator $\mapempest$ (or a fixed labeling of $\pi_J$), a RDP can be easily constructed by repeating step 2 above (starting with the whole space), but only producing a split if the cells $S_k \in \pi_J$ on either side of the split, but within the hypercube of interest, have different values (labels).

\section{Error Decay Rates}\label{sec:error_rates}
We gauge the quality of $\mapempest$ by characterizing the decay rate of the estimation and approximation errors.
\emph{Estimation error} is defined as the difference $\abbrefdiv{\mapsopt}-\abbrefdiv{\mapempest}$ and quantifies the error caused by computing $\mapempest$ without knowledge of $\disHzero$ and $\disHone$.
As the number of samples $n$ increases, the estimation error decreases at a rate (exponent of $n$) that depends on the complexity of $\Phi$ and on the properties of $p$ and $q$.
\emph{Approximation error} is defined as $\abbrefdiv{\mapcopt}-\abbrefdiv{\mapsopt}$ and arises in cases where $\mapcopt\notin\candclass$.
To quantify its decay, we think of the candidates rules $\phi\in\Phi$ as being supported on RDPs and the rate of decay in terms of the depth parameter $J$.

We begin in a standard fashion with two basic inequalities that follow from the definitions of $\mapsopt$ and $\mapempest$: ${\abbrefdiv{\mapsopt}-\abbrefdiv{\mapempest}\geq 0}$ and ${\empirfdiv{\mapsopt}-\empirfdiv{\mapempest}\leq 0}$.
They imply that the estimation error is upper bounded by a difference of empirical processes
\begin{align*}
0&\leq \abbrefdiv{\mapsopt}-\abbrefdiv{\mapempest}\\
&\leq-[(\empirfdiv{\mapsopt}-\abbrefdiv{\mapsopt})-(\empirfdiv{\mapempest}-\abbrefdiv{\mapempest})] \\
&=-(\genempir{\mapsopt}-\genempir{\mapempest})/\sqrt{\noobs},
\end{align*}
where the second inequality results from adding and subtracting $\empirfdiv{\mapsopt}$ and $\empirfdiv{\mapempest}$, and where $\genempir{\mapc}=\sqrt{\noobs}(\empirfdiv{\mapc}-\abbrefdiv{\mapc})$.
Adding the approximation error to both sides of the inequality bounds the total error by the two component errors.
\begin{align}\label{equ:fundaineqal}
\begin{split}
0&\leq \underset{\text{total error}}{\underbrace{\abbrefdiv{\mapcopt}-\abbrefdiv{\mapempest}}} \\
 &\leq -\underset{\text{upper bound on est. error}}{\underbrace{(\genempir{\mapsopt}-\genempir{\mapempest})/\sqrt{\noobs}}} + \underset{\text{approx. error}}{\underbrace{\abbrefdiv{\mapcopt}-\abbrefdiv{\mapsopt}}}
\end{split}
\end{align}
We use this equation and examine the estimation and the approximation errors separately, giving the final rate result for the expected total error.

\subsection{Estimation Error}\label{subsect:esterror}
Let $\expectv_p$ and $\expectv_q$ denote expectation operator with respect to $P$ and $Q$.
Then, by writing $\rvert\genempir{\mapempest}-\genempir{\mapsopt}\lvert$ as
\begin{equation*}
\begin{split}
&\Bigg\vert \frac{1}{\noobs}\sum_{\obsindex=1}^{\noobs}  (\log{\mapout(\inputrv_{\obsindex}^{\disHzero})}-\log{\mapsopt(\inputrv_{\obsindex}^{\disHzero})})- \expectv_{\disHzero} \big[\log{\mapout(\inputrv)}\\ 
&-\log{\mapsopt(\inputrv)}\big] + (\mapout(\inputrv_{\obsindex}^{\disHone})-\mapsopt(\inputrv_{\obsindex}^{\disHone})) - \expectv_{\disHone} \big[\mapout(\inputrv)-\mapsopt(\inputrv)\big] \Bigg\vert,
\end{split}
\end{equation*}
it is clear that for a given quantization rule $\mapout$, the empirical averages above converge almost surely to their respective values by the strong law of large numbers.
But because $\mapempest$ can potentially be any element in $\candclass$, any characterization of the convergence rate must hold uniformly over $\candclass$.
It is well-known that uniform rates of convergence depend on the complexity of the function class from which the empirical estimators are drawn~\cite{vandegeer2000}.
Here we use the notion of \emph{bracketing entropy} to characterize complexity of $\candclass$.
Roughly speaking, the bracketing entropy of a function class $\mathcal{G}$ equals the logarithm of the minimum number of function pairs that upper and lower bound (bracket) all the members in $\mathcal{G}$ to within some tolerance $\delta$ and with respect to some norm (a precise definition can be found in~\cite[p. 16]{vandegeer2000}).
We denote the bracketing entropy by $H_B(\delta,\mathcal{G},L_2(\probmeasureHzero))$ and say $\mathcal{G}$ has \emph{bracketing complexity} $\covcomp>0$ if $H_B(\delta,\mathcal{G},L_2(\probmeasureHzero))\leq \compconst \delta^{-\covcomp}$ for all $\delta>0$ and for some constant $A>0$.
Because the members of $\Phi$ are uniformly bounded, it can be shown that $\Phi$ has bracketing complexity $\alpha=1$~\cite{michaellexa_phdthesis2008}.
This fact is incorporated into Theorem~\ref{thm:esterrorconverge} below; however, the proof of the theorem given in Appendix~\ref{app:esterror} assumes the bracketing complexity lies between zero and two. 
The proof therefore yields a slightly more general result than that stated. 

We now introduce two conditions on $p$ and $q$. 
The first simply states that $p$ and $q$ are uniformly bounded.

\noindent\textbf{Condition 1.}
Assume $\denlbound\leq\disHzero(x),\disHone(x)\leq \denubound \text{~for all~} x\in[0,1]^{\ddimen}$, $\denlbound>0$, $\denubound<\infty$.

The second is a condition introduced by Mammen and Tsybakov~\cite{mammen_tsybakov1999,tsybakov2004} and involves a key parameter $\kappa$ that provides insight into when fast convergence rates are possible (i.e., rates faster than $\noobs^{-1/2}$).
The condition arises in a slightly different form in function estimation and Bayesian classification problems, and within these contexts, it can be related to the behavior of $p$ and $q$ near a boundary of interest%
\footnote{In Bayesian classification and function estimation, the condition is known as the \emph{margin condition}}.
For example, in Bayesian classification, small $\kappa$ implies a ``steep'' regression function at the Bayes decision boundary and thus easier classification; large $\kappa$ implies a ``flat'' transition and harder classification.
Van de Geer~\cite{vandegeer2007} describes $\kappa$ as an ``identifiability'' parameter in the sense that it characterizes how well $\phi\in\candclass$ can be distinguished from $\mapcopt$.
Because $\mapcopt$ is determined by $P$ and $Q$, this condition is ultimately a condition on these underlying distributions.

\noindent\textbf{Condition 2.}
There exists constants $\marconst>0$ and $\marp\geq 1$ such that for all $\mapout \in\candclass$,
\begin{equation}\label{equ:margincond}
\abbrefdiv{\mapcopt}-\abbrefdiv{\mapout}\geq \norm{\mapcopt-\mapout}_{L_2}^{\marp}/\marconst.
\end{equation}

If $\kappa$ is small (close to 1) for a given $P$ and $Q$, the difference $\abbrefdiv{\mapcopt}-\abbrefdiv{\mapout}$ is larger for $\phi\in\candclass$ close to $\mapcopt$ (those $\phi$ such that $\norm{\mapcopt-\mapout}_{L_2}\leq1$) compared to those distributions having larger $\kappa$ values.
Intuitively, this means that such $\phi$ are more distinguishable from $\mapcopt$ for those distributions satisfying Condition 2 with small $\kappa$ compared to those distributions satisfying Condition 2 with larger $\kappa$ values, where distinguishability is measured in terms of divergence loss $\abbrefdiv{\mapcopt}-\abbrefdiv{\mapout}$%
\footnote{Note that the distinguishability in terms of divergence loss is intimately connected with how $\mapempest$ is computed: since $\empirfdiv{\mapout}$ is a surrogate of $\abbrefdiv{\mapout}$, maximizing $\empirfdiv{\mapout}$ over $\mapout\in\candclass$ is a surrogate for maximizing $\abbrefdiv{\mapout}$ over $\mapout\in\candclass$, or equivalently minimizing $\abbrefdiv{\mapcopt}-\abbrefdiv{\mapout}$.}. 
The following result shows that $\kappa$ effectively characterizes this aspect of the problem, and like for Bayesian classification and estimation, is a key parameter for the error convergence rate.

\begin{theorem}[Estimation error]\label{thm:esterrorconverge}
Let $\mapempest$, $\mapsopt$, and $\mapcopt$ be as defined in~\eqref{equ:empirest2},~\eqref{equ:knownpq_KLest} and~\eqref{equ:mapcopt} respectively.  Suppose Conditions 1 and 2 are met for some constants $\denlbound, \denubound, \marconst,\text{~and~} \marp$.
Then for any $0<\epsilon<1$ we have
\begin{align*}\label{equ:vandegeerlem}
\abbrefdiv{\mapcopt}&-\expectv\abbrefdiv{\mapempest} \leq \Bigl(\frac{1+\epsilon}{1-\epsilon}\Bigr) \\ 
& \biggl[\textnormal{const}(\denlbound,\denubound,\marconst,\marp)~ \noobs^{-\frac{\marp}{2\marp-1}}+ \abbrefdiv{\mapcopt}-\abbrefdiv{\mapsopt}\biggr],
\end{align*}
for sufficiently large $\noobs$ where $\textnormal{const}(\denlbound,\denubound,\marconst,\marp)$ is a decreasing function of $\epsilon$.
\end{theorem}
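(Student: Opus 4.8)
The plan is to control the estimation-error term in the fundamental inequality~\eqref{equ:fundaineqal} by a maximal inequality for the empirical process $\genempir{\mapout}-\genempir{\mapsopt}$, and then to read off the rate from a fixed-point balance. Decomposing $\genempir{\mapout}-\genempir{\mapsopt}$ into a centered empirical process over the $\disHzero$-samples with increment $\log\mapout-\log\mapsopt$ and one over the $\disHone$-samples with increment $\mapout-\mapsopt$, I would first record a variance bound. Because every $\mapout\in\candclass$ takes values in $[\lbound,\ubound]$ with $\lbound>0$, the map $t\mapsto\log t$ is Lipschitz on $[\lbound,\ubound]$, and Condition~1 bounds the densities by $\denubound$; hence the variance of each increment is at most $\textnormal{const}(\denlbound,\denubound,\lbound)\,\norm{\mapout-\mapsopt}_{L_2}^{2}$.

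The second ingredient converts this $L_2$ radius into excess divergence. Condition~2 gives $\norm{\mapcopt-\mapout}_{L_2}\leq\bigl(\marconst(\abbrefdiv{\mapcopt}-\abbrefdiv{\mapout})\bigr)^{1/\marp}$, and the triangle inequality $\norm{\mapout-\mapsopt}_{L_2}\leq\norm{\mapout-\mapcopt}_{L_2}+\norm{\mapcopt-\mapsopt}_{L_2}$ lets the approximation gap enter only through $\norm{\mapcopt-\mapsopt}_{L_2}$, which is itself controlled by $\abbrefdiv{\mapcopt}-\abbrefdiv{\mapsopt}$. Thus on the shell $\{\mapout:\abbrefdiv{\mapcopt}-\abbrefdiv{\mapout}\leq t\}$ the increments have $L_2$-size of order $t^{1/\marp}$, up to the approximation term.

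Next I would apply a van de Geer-type peeling bound built on the bracketing entropy~\cite{vandegeer2000}. With $H_B(\delta,\candclass,L_2(\probmeasureHzero))\leq\compconst\delta^{-\covcomp}$ and $0<\covcomp<2$, the entropy integral $\int_0^{\delta}\sqrt{H_B(u)}\,du$ is of order $\delta^{1-\covcomp/2}$, so the modulus of $\genempir{\mapout}-\genempir{\mapsopt}$ over the radius-$\delta$ shell with $\delta=t^{1/\marp}$ is of order $\noobs^{-1/2}t^{(1-\covcomp/2)/\marp}$. Setting $\covcomp=1$, the peeling device bounds the estimation error by a quantity that I split by Young's inequality into a fraction $\epsilon\,(\abbrefdiv{\mapcopt}-\abbrefdiv{\mapempest})$ plus a deterministic remainder of order $\noobs^{-\marp/(2\marp-1)}$; this exponent is exactly the fixed point of $t\approx\noobs^{-1/2}t^{1/(2\marp)}$. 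Moving the $\epsilon$-fraction of the excess divergence to the left of~\eqref{equ:fundaineqal} yields the prefactor $(1+\epsilon)/(1-\epsilon)$, and integrating the tail of the maximal inequality (whose failure event is negligible once $\noobs$ is large) produces the stated expectation bound, with the remainder constant increasing as $\epsilon\downarrow0$.

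The hard part will be the mismatch in centering: the empirical process and its variance are naturally measured by $\norm{\mapout-\mapsopt}_{L_2}$, whereas the margin Condition~2 is anchored at $\mapcopt$. Carrying $\norm{\mapcopt-\mapsopt}_{L_2}$ through the peeling so that it surfaces additively as $\abbrefdiv{\mapcopt}-\abbrefdiv{\mapsopt}$, rather than degrading the exponent, is the delicate step, together with checking that the Young split is uniform across shells so that the geometric sum over peeling layers converges and the $\epsilon$-absorption into the left-hand side is justified.
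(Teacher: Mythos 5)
Your proposal is correct and follows essentially the same route as the paper's proof: the fundamental inequality~\eqref{equ:fundaineqal}, the split of $\genempir{\mapout}-\genempir{\mapsopt}$ into a $\probmeasureHzero$-process on $\log\mapout-\log\mapsopt$ and a $\probmeasureHone$-process on $\mapout-\mapsopt$, Condition~1 to pass between Lebesgue and weighted $L_2$ norms, Condition~2 plus the triangle inequality to handle the $\mapcopt$-versus-$\mapsopt$ centering mismatch, a van de Geer bracketing-entropy maximal inequality (the paper invokes~\cite{vandegeer2000}, Lemma~5.13, with a two-case split on $\norm{\mapempest-\mapsopt}_{L_2}$ rather than your shell peeling), and the same Young-type split (the paper's Lemma~\ref{lem:tech}) with $\epsilon$-absorption producing the $(1+\epsilon)/(1-\epsilon)$ factor and the $\noobs^{-\marp/(2\marp-1)}$ rate. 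The delicate points you flag — the centering mismatch and making the approximation gap surface additively — are exactly the ones the paper resolves via~\eqref{equ:vandegeer1}--\eqref{equ:vandegeer4}, so your plan matches the published argument in both structure and substance.
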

The proof is provided in Appendix~\ref{app:esterror} (see also~\cite{mlexa2010a}) and directly follows from results of van de Geer\cite[pp. 206-207]{vandegeer2007}~\cite{vandegeer2000} and Mammen and Tsybakov~\cite{mammen_tsybakov1999}.

Depending on $P$ and $Q$, the decay rate of the estimation error can be as fast as $\noobs^{-1}$ ($\kappa=1$) and no worse than $\noobs^{-1/2}$ ($\kappa=\infty$).
In particular, if for a given $P$ and $Q$, the approximation error is nonzero (which is commonly the case in quantization problems), we have
\begin{align*}
\abbrefdiv{\mapcopt}-\abbrefdiv{\mapout}&\geq \abbrefdiv{\mapcopt}-\abbrefdiv{\mapsopt} \\
&\geq \text{const} \cdot \frac{\norm{\mapcopt-\mapout}_{L_2}}{M}
\end{align*}
where the first inequality follows from the definition of $\mapsopt$ and the second from the fact that $0\leq \tfrac{\norm{\mapcopt-\mapout}_{L_2}}{M} \leq 1$ for all $\phi\in\candclass$.
Thus with a nonzero approximation error, Condition 2 can be met with $\kappa=1$ and the rate $n^{-1}$ is achievable. 
This situation is common in quantization problems because it is unusual in practice for the level sets of a likelihood ratio function to coincide with a RDP for a fixed depth $J$.
For example, consider the simple scenario where $p(x)$ is a zero-mean unit variance Gaussian density function and $q(x)$ is a Laplace density function (also zero-mean and unit variance), $x\in\real$. 
With $L=8$ and $J=6$, the approximation error is $0.002$ and hence we expect a rate of $\mathcal{O}(n^{-1})$.
Fig.~\ref{fig:1D_gauss-laplace_rates} confirms this result experimentally by plotting $\abbrefdiv{\mapcopt}-\expectv\abbrefdiv{\mapempest}$ as a function of $n$.
(For each value of $n$, Gaussian and Laplacian data were generated and $\mapempest$ was computed using the Flynn and Gray algorithm.)
The black dashed curve on the left hand plot is shown for reference and equals $40n^{-1}+0.002$. 
\begin{figure*}
\begin{minipage}{.49\linewidth}
\centerline{\includegraphics[width=7cm]{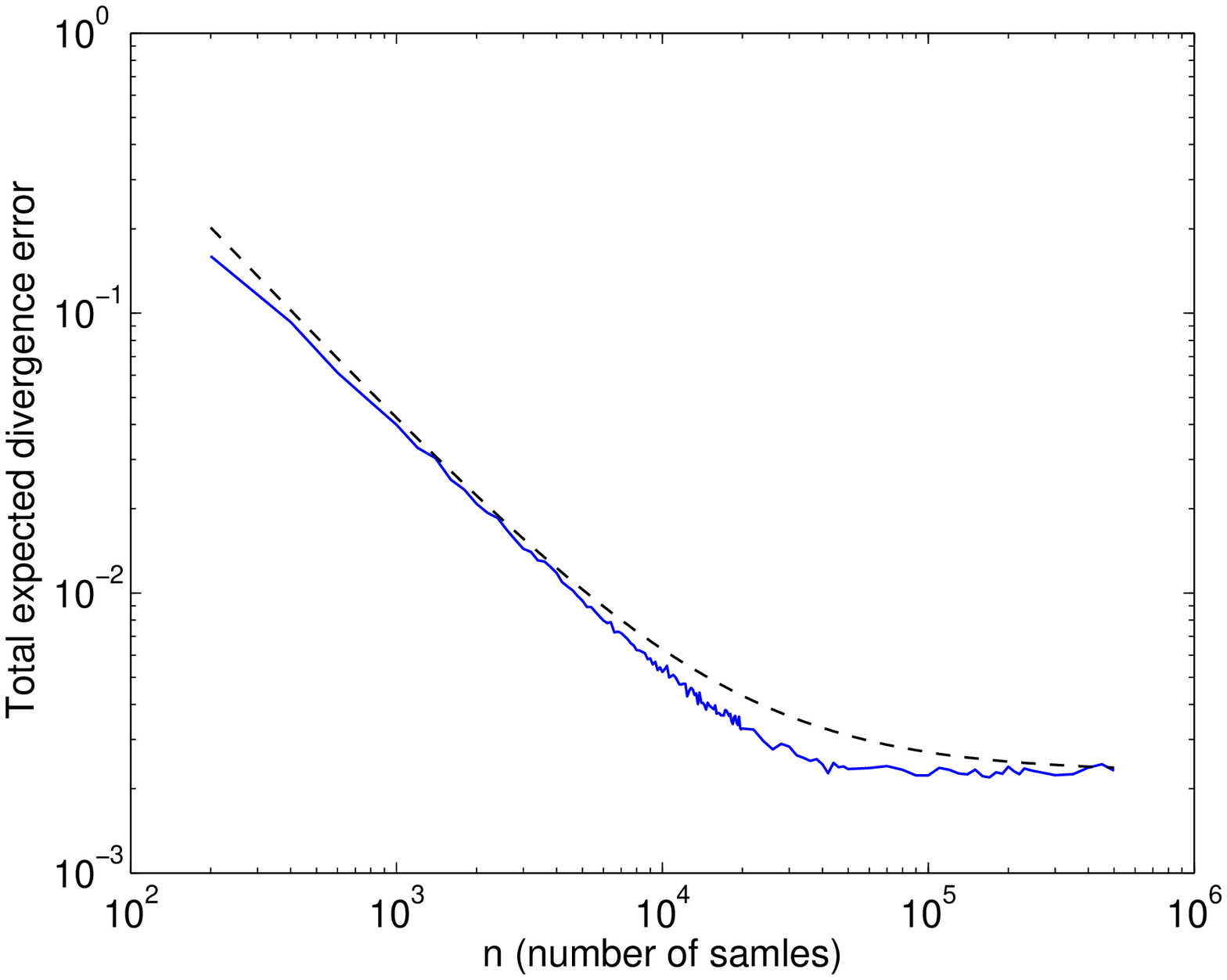}}
\end{minipage}
\begin{minipage}{.49\linewidth}
\centerline{\includegraphics[width=7cm]{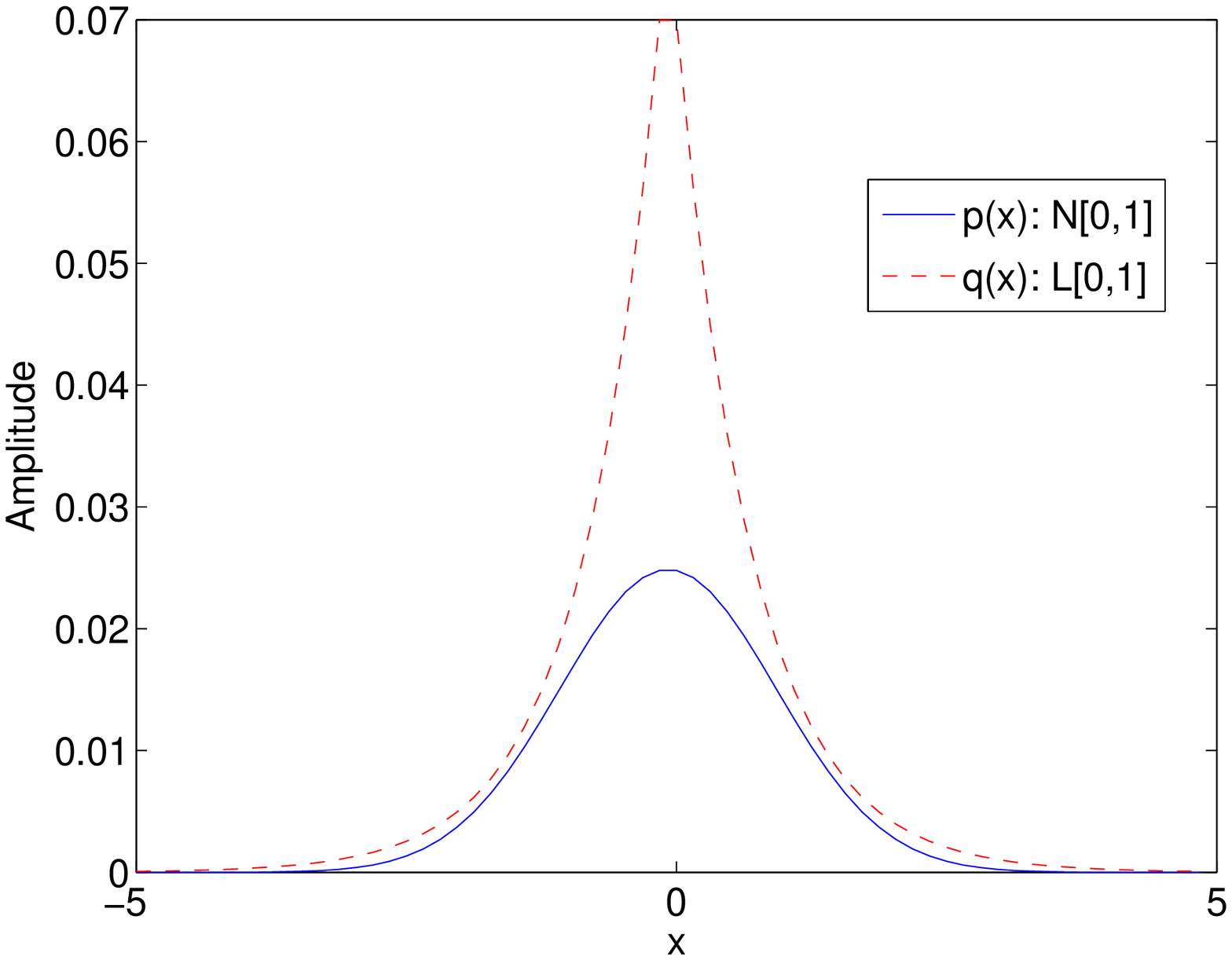}}
\end{minipage}
\caption{Left: Total expected divergence loss $\abbrefdiv{\mapcopt}-\expectv\abbrefdiv{\mapempest}$ plotted as a function of the number of training samples $n$ (solid curve) for the case $L=8$, $J=6$ where $p(x)$ is a zero-mean unit variance Gaussian density and $q(x)$ is a zero-mean unit variance Laplace density. The dashed (black) curve is $\mathcal{O}(n^{-1})$, thus for this example, we have a fast rate of decay.  Right: Probability density functions.}
\label{fig:1D_gauss-laplace_rates}
\end{figure*}

In contrast, if $P$ and $Q$ are such that $\mapcopt\in\candclass$, then Condition 2 is only met with $\kappa=2$, and therefore the resulting decay rate is $\mathcal{O}(n^{-2/3})$.
To derive this result, consider the subset of quantization rules $\mapout\in\candclass$ that share the same partition associated with $\mapcopt$.
Recalling~\eqref{equ:piecerule}, we can in this case write $\mapcopt$ as
\begin{equation*}
\mapcopt(\inputsv)=\sum_{i=0}^{\asize-1} \mapcopt_{i}\indicator{R_{i}}(\inputsv),
\end{equation*}
where $\{\mapcopt_{i}\}$ are the levels of $\mapcopt$.
Then by~\eqref{equ:knownpq_KLest} and the fact that $\mapcopt_i=P(R_i)/Q(R_i)$, we have
\begin{align*}
\abbrefdiv{\mapcopt}&=1+\int \log{(\mapcopt)}~d\probmeasureHzero-\int\mapcopt~d\probmeasureHone \\
&=\int\log{\mapcopt}~d\probmeasureHzero \\
&=\int\mapcopt \log{(\mapcopt)}~d\probmeasureHone \\
&= \sum_{i=0}^{\asize-1} \int_{\region_i} \mapcopt_{i} \log{(\mapcopt_{i})}~d\probmeasureHone.
\end{align*}
Now, because $(\cdot)\log{(\cdot)}$ is differentiable and continuous on the range of the positive real numbers, we can by Taylor's Theorem~\cite{fulks1978} expand $\mapcopt_{i}\log{\mapcopt_i}$ on $R_{i}$ around $c_{R_i}$ for each $i$ to obtain
\begin{equation}\label{equ:taylor1}
\abbrefdiv{\mapcopt}=\sum_{i} \int_{R_i} \mapcopt_{i} \log{(c_{R_i})} +\mapcopt_{i} - c_{R_i} + \mathcal{R}_{i}~d\probmeasureHone,
\end{equation}
where $\mathcal{R}_{i}=\dfrac{1}{2\delta_{i}}(\mapcopt_{i}-c_{R_i})^{2}, i=0,\dotsc,\asize-1$, are the Taylor remainders of the expansions with $\delta_{i}$ lying in between $\mapcopt_{i}$ and $c_{R_i}$.
By adding and subtracting $ \int \log{(\mapout)}~dP$, ~\eqref{equ:taylor1} can be rearranged to yield
\begin{align}
\abbrefdiv{\mapcopt}-\abbrefdiv{\mapout}&=\sum_{i}\int_{R_i}\mathcal{R}_{i}~dQ \nonumber \\
&\geq \dfrac{c}{2\ubound} \sum_{i}\int_{R_i} (\mapcopt_{i}-c_{R_i})^{2}~d\inputsv \\
&= \dfrac{c}{2\ubound} \norm{\mapcopt-\mapout}_{L_2}^{2}
\end{align}
where the inequality follows from replacing $\delta_i$ with $\ubound$ in the remainder term and using the fact that $q$ is lower bounded by $c$ (Condition 1).
Thus, when there is no approximation for the given distributions $P$ and $Q$ (and for given values of $J$, $L$, $m$, and $M$) the best guaranteed convergence rate is $\mathcal{O}(n^{-2/3})$.
Intuitively, this is reasonable since among those $\mapout$ that share the same partition as $\mapcopt$, it is harder to distinguish $\mapcopt$ compared to the case where there is a nonzero approximation error.

Nguyen, Wainwright and Jordan reported a similar result to Theorem~\ref{thm:esterrorconverge} in~\cite{nguyen_wainwright_jordan2010} .
In their investigation, they used an empirical estimator of the same form as~\eqref{equ:empirest2}, but did not consider quantization, nor did they incorporate a margin condition like Condition 2 into their formulation.
They considered a class of (inverse) likelihood ratio functions $\mathcal{F}$ that satisfies a complexity condition like Condition 1 and found that the difference ${\abbrefdiv{f^{*}}-\empirfdiv{\hat{f}_{\noobs}}}$ decays as $\mathcal{O}(\noobs^{-1/(2+\alpha)})$, where $D_n(\cdot)$ and $D(\cdot)$ are as defined in~\eqref{equ:empirKL} and~\eqref{equ:knownpq_KLest}, $f^*\in\mathcal{F}$ is the best in class likelihood ratio function, and $\hat{f}_n$ is an empirical estimator similar to~\eqref{equ:empirest2}.
Note that this rate is strictly less than the rate in Theorem~\ref{thm:esterrorconverge} even if $\marp$ is eliminated from the formulation (take $\marp\rightarrow\infty$).

\subsection{Approximation Error}\label{subsect:approxerror}
The approximation error analysis also requires that we now think of $\Phi$ as a class of piecewise constant functions (quantization rules) supported on RDPs.
As discussed in Section~\ref{subsect:RDP}, this is fully consistent with the definition given in~\eqref{equ:defcandclass2}.
With this in mind, we have the result:
\begin{theorem}[Approximation error]\label{thm:approxerrorconverge}
Let $\candclass(\asize,J,\lbound,\ubound)$, $\mapsopt$, and $\mapcopt$ be as defined in~\eqref{equ:defcandclass2},~\eqref{equ:knownpq_KLest}, and~\eqref{equ:mapcopt} respectively.
Suppose that Condition 1 is met for some constants $c$ and $C$.
Then the approximation error is bounded as
\begin{equation}\label{equ:vandegeerlem}
\abbrefdiv{\mapcopt}-\abbrefdiv{\mapsopt} \leq \textnormal{const}(\beta,\denlbound,\denubound,\lbound,\ubound,\asize)~2^{-\RDPdepth}.
\end{equation}
\end{theorem}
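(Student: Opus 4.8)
The plan is to exploit that $\mapsopt$ maximizes $\abbrefdiv{\cdot}$ over $\candclass$, so that for \emph{any} fixed candidate $\mapout\in\candclass(\asize,\RDPdepth,\lbound,\ubound)$ one has $\abbrefdiv{\mapcopt}-\abbrefdiv{\mapsopt}\leq\abbrefdiv{\mapcopt}-\abbrefdiv{\mapout}$. It therefore suffices to construct a single convenient $\mapout$ from $\mapcopt$ and to bound $\abbrefdiv{\mapcopt}-\abbrefdiv{\mapout}$. The natural choice labels each cell $\cell_k$ of the uniform partition $\partition_{\RDPdepth}$ by a level of $\mapcopt$ occupying it: on every $\cell_k$ lying entirely inside a single region of $\mapcopt$ set $\mapout\equiv\mapcopt$, and on the remaining cells---those meeting the boundary set $B(\mapcopt)$---assign any admissible value in $[\lbound,\ubound]$. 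Since $\mapcopt$ takes values in $[\lbound,\ubound]$, this yields a valid $\asize$-level labeling of $\partition_{\RDPdepth}$, so $\mapout\in\candclass(\asize,\RDPdepth,\lbound,\ubound)$.

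Next I would write the gap explicitly as
\begin{equation*}
\abbrefdiv{\mapcopt}-\abbrefdiv{\mapout}=\int\bigl(\log\mapcopt-\log\mapout\bigr)\,d\probmeasureHzero-\int\bigl(\mapcopt-\mapout\bigr)\,d\probmeasureHone,
\end{equation*}
and note that both integrands vanish off the \emph{transition set} $T$, the union of those cells of $\partition_{\RDPdepth}$ that meet $B(\mapcopt)$, because $\mapout\equiv\mapcopt$ elsewhere. On $T$ the integrands are uniformly controlled: $\lbound\leq\mapcopt,\mapout\leq\ubound$ gives $\lvert\log\mapcopt-\log\mapout\rvert\leq\log(\ubound/\lbound)$ and $\lvert\mapcopt-\mapout\rvert\leq\ubound-\lbound$. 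Using Condition~1 to bound $\disHzero,\disHone\leq\denubound$, this reduces the whole problem to estimating a volume,
\begin{equation*}
\abbrefdiv{\mapcopt}-\abbrefdiv{\mapout}\leq\denubound\bigl(\log(\ubound/\lbound)+\ubound-\lbound\bigr)\,\text{Leb}(T).
\end{equation*}

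The final and main step is the geometric estimate $\text{Leb}(T)\leq\text{const}(\beta,\ddimen,\asize)\,2^{-\RDPdepth}$. For this I would invoke the defining covering property of the PC class: $B(\mapcopt)$ is coverable by $N(r)\leq\beta r^{-(\ddimen-1)}$ balls of diameter $r$. Taking $r$ of order the cell diameter $\sqrt{\ddimen}\,2^{-\RDPdepth}$, each covering ball meets only a bounded number (depending on $\ddimen$) of cells of $\partition_{\RDPdepth}$, so $T$ is composed of at most $\text{const}(\beta,\ddimen)\,2^{\RDPdepth(\ddimen-1)}$ cells; multiplying by the per-cell volume $2^{-\RDPdepth\ddimen}$ gives $\text{Leb}(T)\lesssim 2^{-\RDPdepth}$. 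Substituting into the previous display produces the claimed bound, with the constant absorbing $\denlbound,\denubound,\lbound,\ubound$ and the dependence on $\asize$ through the several inter-region interfaces that together constitute $B(\mapcopt)$. I expect this covering-to-grid count---carefully passing from a diameter-$r$ ball cover of $B(\mapcopt)$ to the number of dyadic cells it can intersect---to be the only delicate part; the divergence-gap rewriting and the boundedness estimates are routine.
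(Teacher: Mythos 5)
Your proposal is correct, and it takes a genuinely different---and in fact leaner---route than the paper's own proof. Both arguments start from the same reduction $\abbrefdiv{\mapcopt}-\abbrefdiv{\mapsopt}\leq\abbrefdiv{\mapcopt}-\abbrefdiv{\mapout}$ for a hand-built comparison rule, but the paper's comparison rule $\mapout'$ assigns to each region $\region'_{\partindex}$ the probability ratio $\probmeasureHzero(\region'_{\partindex})/\probmeasureHone(\region'_{\partindex})$ rather than copying the levels of $\mapcopt$. Since those ratios depend on what happens in the boundary cells, the paper's gap does \emph{not} vanish away from $B(\mapcopt)$: it has to pass to an $L_1$ bound via $\log x\leq x-1$ and then control the ratio perturbation $\lvert \probmeasureHzero(\optcell_{\partindex})/\probmeasureHone(\optcell_{\partindex})-\probmeasureHzero(\region'_{\partindex})/\probmeasureHone(\region'_{\partindex})\rvert$ region by region; this is exactly where the factor $\asize$ and the terms $(\denubound/\denlbound)(\denubound/\denlbound+1)$ in its constant come from, and it is what the paper's remark about the non-additivity of the KL divergence refers to. Your choice $\mapout\equiv\mapcopt$ off the transition set $T$ kills the non-boundary contribution identically, so the divergence loss localizes to $T$ and is finished by the pointwise bounds $\log(\ubound/\lbound)$ and $\ubound-\lbound$ together with $\disHzero,\disHone\leq\denubound$; the resulting constant needs neither $\asize$ nor $\denlbound$. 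Your covering-to-grid count is precisely the content of the paper's Lemma~4 (quoted from Castro's thesis for RDPs, whose depth-$\RDPdepth$ cells are exactly the cells of $\pi_{\RDPdepth}$), so proving it directly from the $N(r)\leq\beta r^{-(\ddimen-1)}$ condition, as you sketch, is legitimate and no harder. Two small repairs: (i) to stay inside $\candclass(\asize,\RDPdepth,\lbound,\ubound)$ you cannot give boundary cells ``any admissible value,'' since an arbitrary value could create an $(\asize+1)$-st level --- assign each boundary cell one of the existing $\asize$ levels of $\mapcopt$ instead; (ii) note that on a cell disjoint from $B(\mapcopt)$ the function $\mapcopt$ is locally constant at every point of a connected set, hence constant there, which is what makes your labeling well defined. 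Neither point changes the structure of your argument.
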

The proof of this result is given in Appendix~\ref{pf:approxerror} (see also~\cite{lexa_2011a}).
It follows a related, function estimation result in~\cite{ruicastro_phd_thesis2007} with one important exception: the KL divergence is not additive, thus unlike a mean squared error metric, the approximation error $\abbrefdiv{\mapcopt}-\abbrefdiv{\mapsopt}$ cannot be quantified cell by cell.  
Details are provided in the proof.

The combination of Theorem~\ref{thm:esterrorconverge} and Theorem~\ref{thm:approxerrorconverge} gives the decay rate of the total expected error in terms of the number of training samples $n$ and the depth $J$ of the uniform dyadic partition $\pi_{J}$.
To balance the errors and obtain a rate only in terms of $n$, one can express $J$ as a function of $n$.
Setting $J=\lceil \kappa \ln{n}/(2\kappa-1)\ln{2} \rceil$ yields the final result
\begin{equation}
\abbrefdiv{\mapcopt}-\expectv\abbrefdiv{\mapempest} \leq \textnormal{const}\cdot \noobs^{-\frac{\marp}{2\marp-1}},
\end{equation}
for sufficiently large $\noobs$.

\section{Application: Quantization under communication constraints}\label{sect:app}
When signals are measured and digitized at one location but processed at another, communication of the data is necessary.
Because of ever present power, computing, and rate constraints, the raw data cannot be transmitted in full fidelity; instead a summary of the data is sent.
When the ultimate goal is classification or detection, one strategy to maximize performance and minimize communication costs is to heavily quantize the data such that the KL divergence is maximized.
This is perhaps the simplest strategy and hence attractive when communications are severely constrained.  
Optimal likelihood-ratio partitions can be very different from typical nearest neighbor (Voronoi) partitions that are associated with quantizers designed to minimize mean squared error (see Figs.~\ref{fig:lr} and~\ref{fig:classification_maps}).
Nevertheless, past work in quantization for classification has forced a small-cell property in the design strategy resulting in partitions resembling nearest neighbor partitions~\cite{gupta_hero2003}.
Consequently, optimal partitions with disjoint regions, for example, cannot be well-approximated by these methods.
The EDM quantization method overcomes this shortcoming.
\begin{figure}
\centerline{\includegraphics[width=7cm]{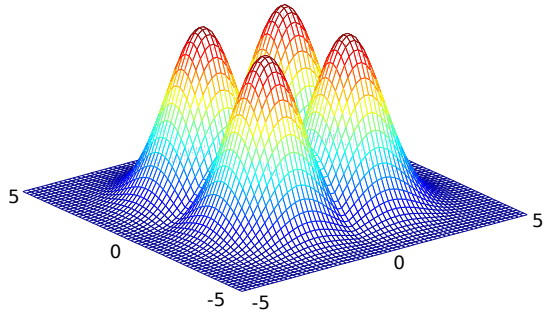}}
\caption{Plot of a likelihood-ratio function of a zero-mean bivariate Gaussian probability density function and a zero-mean bivariate Laplace probability density function.
The level sets of this function, which are concentric circles centered in the quadrants, form the boundaries of an optimal likelihood-ratio partition.
Such partitions are not well-approximated by optimal nearest neighbor partitions.}
\label{fig:lr}
\end{figure}

As an illustration, we consider $P$ to be a zero-mean bivariate Gaussian distribution and $Q$ to be a zero-mean bivariate Laplace distribution, both with identity correlation matrices; $P$ and $Q$ thus differ only in their basic shapes.
The plot of the likelihood ratio in Fig.~\ref{fig:lr} shows that the boundaries of the optimal likelihood-ratio partition are concentric circles in each quadrant.
Fig.~\ref{subfig:ex1a} depicts the best in class quantization rule along with its associated RDP in Fig.~\ref{subfig:ex1b}.
The result was generated with the Flynn and Gray algorithm but with $P_n(S_k)$ and $Q_n(S_k)$ in Algorithm~\ref{algor:flynn_gray} replaced by $P(S_k)$ and $Q(S_k)$.
(Data points lying outside of $[-5,5]^2$ were simply ignored.)
Convergence occurred in 8 iterations.
Fig.~\ref{subfig:ex1c} shows the empirical estimator generated from training sets each of size of two million samples.
In this case, the Flynn and Gray algorithm converged in 11 iterations.
\begin{figure*}
\centerline{\subfigure[Best-in-class rule]{\includegraphics[width=6cm]{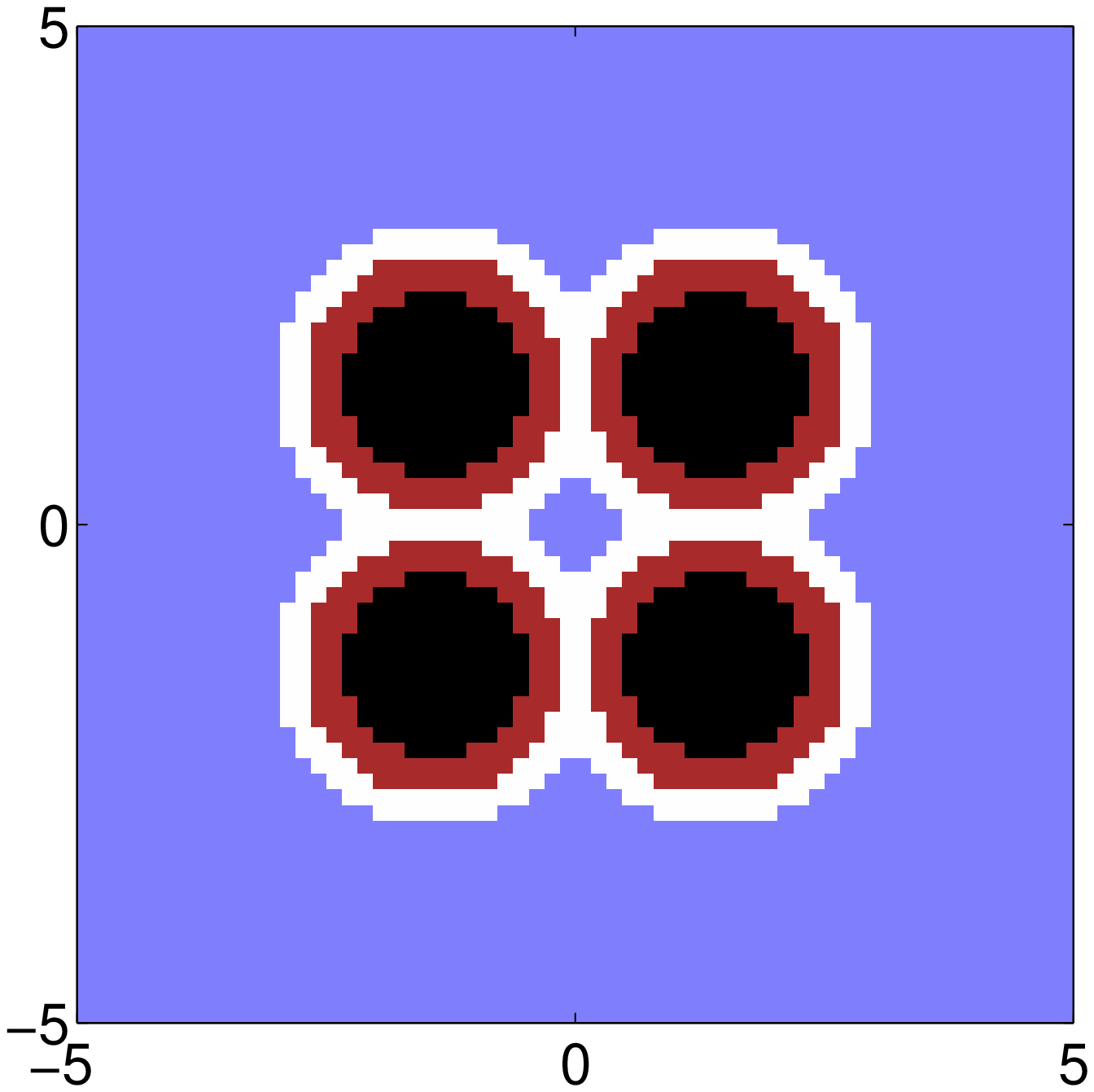}\label{subfig:ex1a}}
\hfil
\subfigure[Associated best-in-class RDP]{\includegraphics[width=6cm]{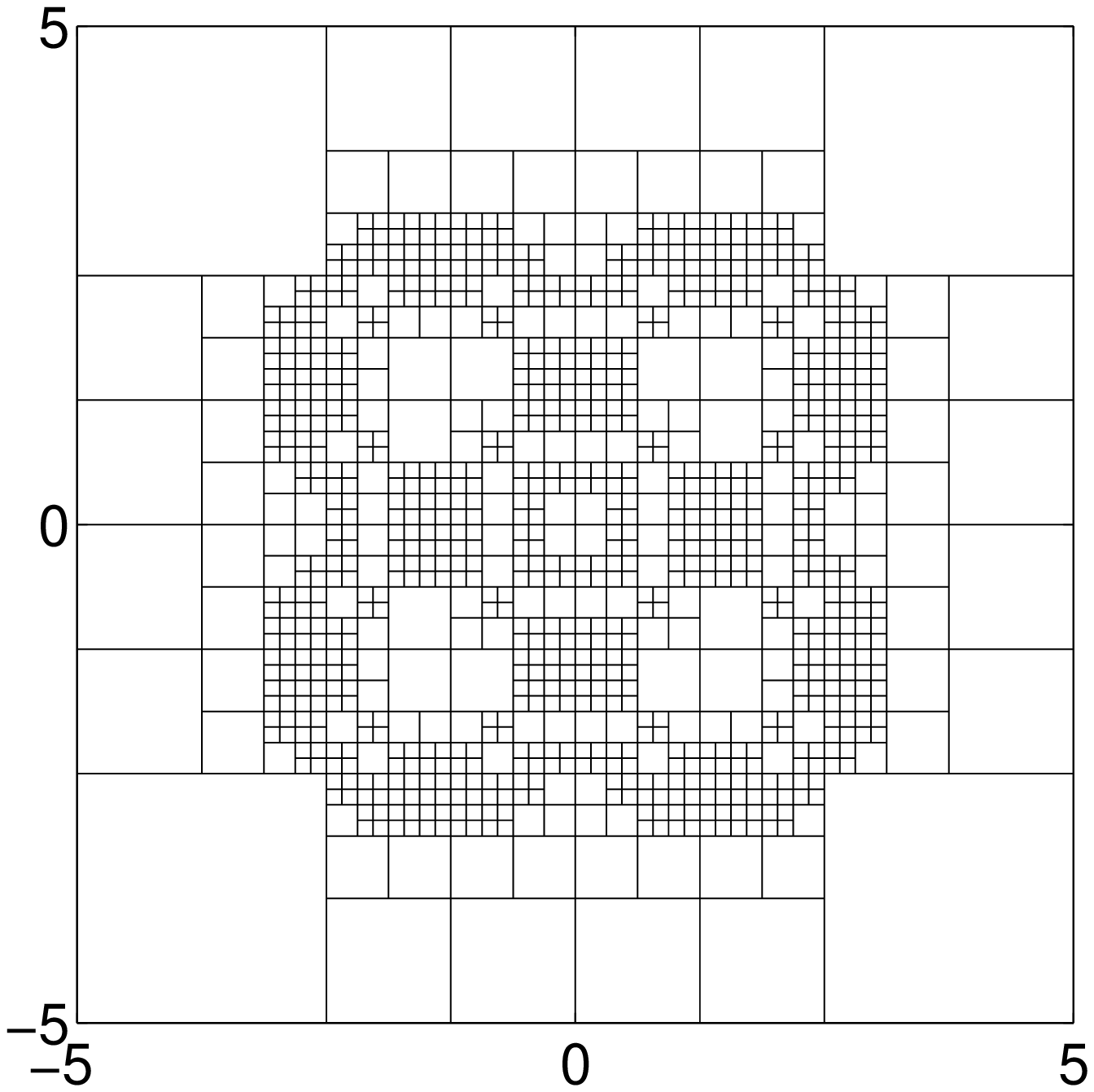}\label{subfig:ex1b}}}
\centerline{\subfigure[EDM estimator]{\includegraphics[width=6cm]{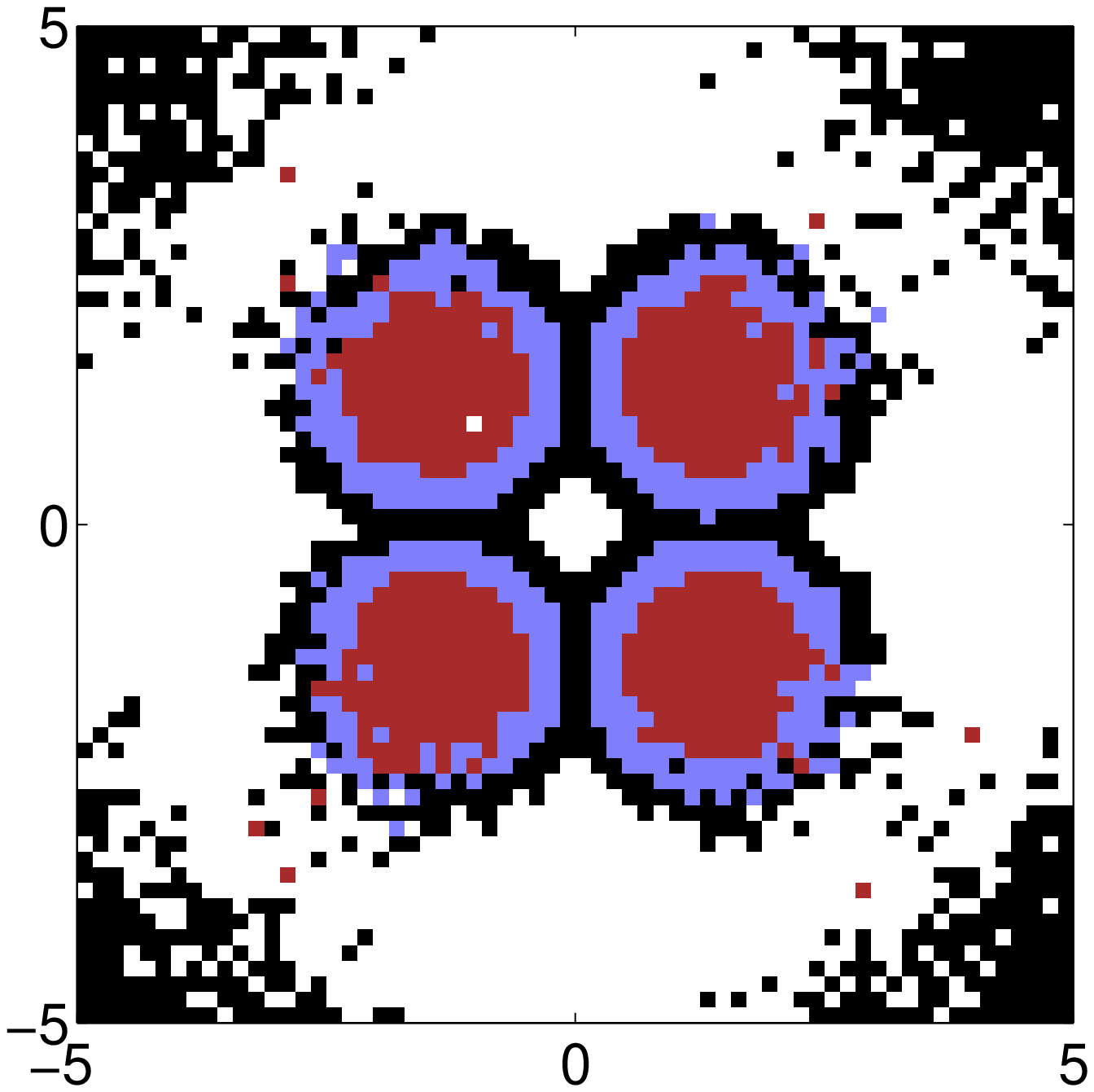}\label{subfig:ex1c}}
\hfil
\subfigure[Associated estimator RDP]{\includegraphics[width=6cm]{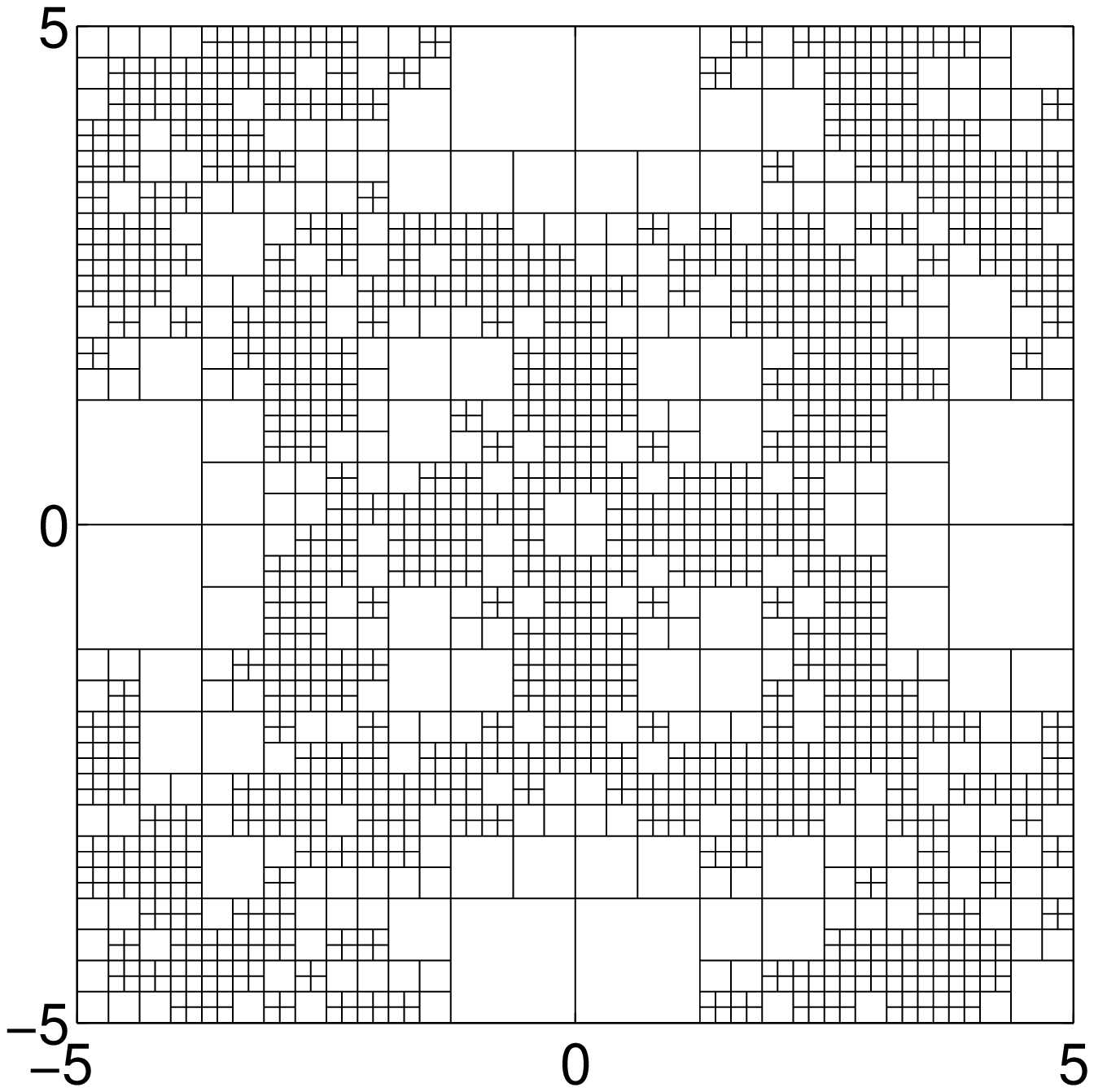}\label{subfig:ex1d}}}
\caption{Best-in-class and EDM quantization rules, and their associated recursive dyadic partitions when $P$ is bivariate Gaussian and $Q$ is bivariate Laplace, $L=4$, $J=6$. Note that the different cell labelings (colors) are inconsequential in terms of the divergence.}\label{fig:classification_maps}
\end{figure*}

In comparison to the best in class quantization rule, Fig.~\ref{fig:classification_maps} shows the effect of the trying to estimate $P$ and $Q$ on $\pi_{J}$ for low probability regions (corner regions).
In other words, the lack of data within these regions makes approximating $P$ and $Q$ on $\pi_{J}$ difficult, especially by empirical averages.
More sophisticated density estimation methods would improve this aspect of the estimator, such as kernal based methods.
The estimator might also be improved if one approximates $P$ and $Q$ on a (data-dependent) RDP instead of on $\pi_J$ (see e.g.,~\cite{devroye-gyorfi-lugosi1996}).

\section{Conclusion}\label{sect:conclusion}
In summary, EDM quantization provides a means of finding quantization rules, or more generally, low dimensional transformations that best preserve the divergence between two hypothesized distributions. 
EDM estimators can be computed using the Flynn and Gray algorithm, and they can exhibit fast error convergence rates as a function of the number of training samples.
The EDM formulation benefits from its connection to empirical process theory and possesses the flexibility to overcome the necessity of a small-cell constraint and allow efficient encoding.

\section*{Acknowledgments}
I thank John Thompson and Mike Davies for their financial support in presenting this work at conferences and for allowing me access to the computational resources of The University of Edinburgh. I also thank Don Johnson for many helpful discussions during the early stages of this work.

\appendix
\section{Appendices}

\subsection{Proof of Theorem~\ref{thm:esterrorconverge}}\label{app:esterror}
The proof proceeds by considering the behavior of $\rvert\genempir{\mapempest}-\genempir{\mapsopt}\lvert$ from~\eqref{equ:fundaineqal} as a function of the $L_2$ distance between $\mapempest$ and $\mapsopt$. 
This is done by considering a weighted empircal process for two different cases depending on the value of $\norm{\mapempest-\mapsopt}_{L_2}$.
The different cases yield different rates of convergence, thus they must be treated separately.
At the heart of the argument is Lemma 3, a concentration inequality result by van de Geer~\cite{vandegeer2000}, Lemma~5.13] concerning the supremum of weighted empirical processes (supremums are considered because we want uniform convergence).
The application of this result is not trivial, hence most of the proof is geared toward formulating the problem properly, most of this is done in Case 1 of the proof and Lemma 2.
Case 1 is a slight modification of a proof found in~\cite[pp. 206-207]{vandegeer2007}; Lemma 2 is original.
For more information regarding empirical process theory see~\cite{vandegeer2000}.
Lastly, as stated in Section~\ref{subsect:esterror}, the proof only requires the bracketing complexity of $\Phi$ satisfy $0<\alpha<2$.

Define the random variable
\begin{equation}\label{equ:RV}
\auxrv_{\noobs}=\frac{\rvert\genempir{\mapempest}-\genempir{\mapsopt}\lvert}{\denubound^{(\covcomp+2)/4}\norm{\mapempest-\mapsopt}_{L_2}^{1-\alpha/2}\vee \noobs^{-(2-\covcomp)/2(2+\covcomp)}},
\end{equation}
where $x\vee y=\max{(x,y)}$.
We consider the following two cases: $(1)~ \norm{\mapempest-\mapsopt}_{L_2}>\denubound^{-1/2} \noobs^{-1/(2+\covcomp)}$ and $(2)~ \norm{\mapempest-\mapsopt}_{L_2}\leq \denubound^{-1/2}\noobs^{-1/(2+\covcomp)}$.

\vspace{5pt}
\noindent\textbf{Case 1.}
Under this case~\eqref{equ:RV} simplifies to
\begin{equation}\label{equ:empprocesscond}
\auxrv_{\noobs}=\frac{\rvert\genempir{\mapempest}-\genempir{\mapsopt}\lvert}{\denubound^{(\covcomp+2)/4}\norm{\mapempest-\mapsopt}_{L_2}^{\compp}}
\end{equation}
where $\compp=1-\covcomp/2$.
For $\mapout=\mapsopt$,~\eqref{equ:empprocesscond} is defined to be zero.
Recalling the inequality~\eqref{equ:fundaineqal}, we have
\begin{align}
\begin{split}
\abbrefdiv{\mapcopt}-\abbrefdiv{\mapempest}&\leq -(\genempir{\mapsopt}-\genempir{\mapempest})/\sqrt{\noobs} \\
 & \qquad + \abbrefdiv{\mapcopt}-\abbrefdiv{\mapsopt}\end{split} \nonumber \\
\begin{split}\label{equ:vandegeer5}
&\leq \auxrv_{\noobs}\denubound^{(\covcomp+2)/4}\norm{\mapempest-\mapsopt}_{L_2}^{\compp}/\sqrt{\noobs} \\
& \qquad + \abbrefdiv{\mapcopt}-\abbrefdiv{\mapsopt}.
\end{split}
\end{align}
Condition 2 implies
\begin{equation}\begin{split}
\norm{\mapcopt-\mapempest}_{L_2}^{\compp}&\leq \marconst^{\compp/\marp} (\abbrefdiv{\mapcopt}-\abbrefdiv{\mapempest})^{\compp/\marp} \\
\norm{\mapcopt-\mapsopt}_{L_2}^{\compp}&\leq \marconst^{\compp/\marp} (\abbrefdiv{\mapcopt}-\abbrefdiv{\mapsopt})^{\compp/\marp}.
\end{split}\label{equ:vandegeer1}
\end{equation}
Hence, by the triangle inequality and~\eqref{equ:vandegeer1}, we have
\begin{align}
\norm{\mapempest-\mapsopt}_{L_2}^{\compp} &\leq \norm{\mapcopt-\mapempest}_{L_2}^{\compp} + \norm{\mapcopt-\mapsopt}_{L_2}^{\compp} \nonumber \\
\begin{split}
 &\leq \marconst^{\compp/\marp}(\abbrefdiv{\mapcopt}-\abbrefdiv{\mapempest})^{\compp/\marp} \\
 &\qquad + \marconst^{\compp/\marp}(\abbrefdiv{\mapcopt}-\abbrefdiv{\mapsopt})^{\compp/\marp}. \label{equ:vandegeer4}
\end{split}
\end{align}
Using~\eqref{equ:vandegeer4} in~\eqref{equ:vandegeer5}, shows $\abbrefdiv{\mapcopt}-\abbrefdiv{\mapempest}$ is less than or equal to
\begin{align*}
 &\Bigl[\denubound^{(\covcomp+2)/4}\marconst^{\compp/\marp}\noobs^{-1/2}\auxrv_{\noobs}(\abbrefdiv{\mapcopt}-\abbrefdiv{\mapempest})^{\compp/\marp}+ \denubound^{(\covcomp+2)/4} \\
 &\marconst^{\compp/\marp}\noobs^{-1/2}\auxrv_{\noobs}(\abbrefdiv{\mapcopt}-\abbrefdiv{\mapsopt})^{\compp/\marp}\Bigr]+ \abbrefdiv{\mapcopt}-\abbrefdiv{\mapsopt}.
\end{align*}
We now apply Lemma~\ref{lem:tech}~to each of the terms within the brackets to obtain
\begin{align*}
&\abbrefdiv{\mapcopt}-\abbrefdiv{\mapempest} \leq \epsilon \Bigl[(\abbrefdiv{\mapcopt}-\abbrefdiv{\mapempest})+(\abbrefdiv{\mapcopt}-\abbrefdiv{\mapsopt})\Bigr] \\
&\quad + 2\denubound^{\frac{-\marp\compp}{2(\marp+\compp)}}\Bigl(\frac{\marconst}{\epsilon}\Bigr)^{\frac{\compp}{\marp-\compp}} \auxrv_{\noobs}^{\frac{\marp}{\marp-\compp}}  \noobs^{-\frac{\marp}{2(\marp-\compp)}} + \abbrefdiv{\mapcopt}-\abbrefdiv{\mapsopt}.
\end{align*}
By rearranging the previous expression and dropping a factor of $\tfrac{1}{1+\epsilon}<1$, we have
\begin{align}\label{equ:vandegeer2}
&\abbrefdiv{\mapcopt}-\abbrefdiv{\mapempest} \leq  \Bigl(\frac{1+\epsilon}{1-\epsilon}\Bigr) \nonumber \\&
  \biggl[2 \denubound^{\frac{-\marp\compp}{2(\marp+\compp)}}(\marconst/\epsilon)^{\frac{\compp}{\marp-\compp}} \auxrv_{\noobs}^{\frac{\marp}{\marp-\compp}}  \noobs^{-\frac{\marp}{2(\marp-\compp)}}
  + \abbrefdiv{\mapcopt}-\abbrefdiv{\mapsopt}\biggr].
\end{align}
For any $r\geq \tfrac{\marp}{\marp-\compp}$, we have by Jensen's inequality~\cite{cover_thomas1991}, and Lemma~\ref{lem:empprocess} that
\begin{equation}\label{equ:vandegeer3}
\expectv \auxrv_{\noobs}^{\frac{\marp}{\marp-\compp}}
=\expectv\Bigl[(\auxrv_{\noobs}^{r})^{\frac{\marp}{r(\marp-\compp)}}\Bigr]
 \leq \bigl[\expectv\auxrv_{\noobs}^{r}\bigl]^{\frac{\marp}{r(\marp-\compp)}}
 \leq c_{2}^{\frac{\marp}{\marp-\compp}}.
\end{equation}
Taking the expectation of~\eqref{equ:vandegeer2} and applying~\eqref{equ:vandegeer3}, we conclude
\begin{equation*}\begin{split}
&\abbrefdiv{\mapcopt}-\expectv\abbrefdiv{\mapempest} \leq \Bigl(\frac{1+\epsilon}{1-\epsilon}\Bigr) \\
&\quad \biggl[2  \denubound^{\frac{-\marp\compp}{2(\marp+\compp)}}\bigl(\marconst/\epsilon\bigr)^{\frac{\compp}{\marp-\compp}} c_{2}^{\frac{\marp}{\marp-\compp}} \noobs^{-\frac{\marp}{2(\marp-\compp)}} + \abbrefdiv{\mapcopt}-\abbrefdiv{\mapsopt}\biggr],
\end{split}\end{equation*}
for $\norm{\mapempest-\mapsopt}_{L_2}> \denubound^{-1/2}\noobs^{-1/(2+\covcomp)}$.

\vspace{5pt}
\noindent\textbf{Case 2.}
For this case,
\begin{equation*}
\auxrv_{\noobs}=\frac{\rvert\genempir{\mapempest}-\genempir{\mapsopt}\lvert}{\noobs^{-(2-\covcomp)/2(2+\covcomp)}}.
\end{equation*}
From the fundamental inequality
\begin{align}
\begin{split}
\abbrefdiv{\mapcopt}-\abbrefdiv{\mapempest}&\leq -(\genempir{\mapsopt}-\genempir{\mapempest})/\sqrt{\noobs} \\
& \qquad + \abbrefdiv{\mapcopt}-\abbrefdiv{\mapsopt} \end{split}\nonumber \\
\begin{split}
&\leq \auxrv_{\noobs}\,\noobs^{-(2-\covcomp)/2(2+\covcomp)}\noobs^{-1/2} \\
& \qquad + \abbrefdiv{\mapcopt}-\abbrefdiv{\mapsopt} \end{split} \nonumber\\
&=\auxrv_{\noobs}\,\noobs^{-2/(2+\covcomp)} + \abbrefdiv{\mapcopt}-\abbrefdiv{\mapsopt}.  \label{equ:fundaineq2}
\end{align}
Taking the expectation of~\eqref{equ:fundaineq2} and applying Lemma~\ref{lem:empprocess} yields
\begin{equation}\label{equ:fundaineq3}
\abbrefdiv{\mapcopt}-\expectv\abbrefdiv{\mapempest}\leq c_{3}\,\noobs^{-2/(2+\covcomp)} + \abbrefdiv{\mapcopt}-\abbrefdiv{\mapsopt},
\end{equation}
for $\norm{\mapempest-\mapsopt}_{L_2}\leq \noobs^{-1/(2+\covcomp)}$.
The rate attained in~\eqref{equ:fundaineq3} is (strictly) faster than that attained in Case 1 (${\noobs^{-2/(2+\alpha)}<\noobs^{-\kappa/(2(\kappa-1)+\alpha)}}$ for $\marp>1-\alpha/2, \,0<\covcomp<2$).
Therefore, the decay of the total divergence loss is governed by the slower rate found in Case 1.
$\blacksquare$

\begin{lemma}[Tsybakov and van de Geer~\cite{tsybakov-vandegeer2005},van de Geer~\cite{vandegeer2007}] \label{lem:tech}
We have for all positive $v, t,\text{~and~}\epsilon$, and $\marp>\compp$,
\begin{equation*}
vt^{\compp/\marp}\leq \epsilon t + v^{\frac{\marp}{\marp-\compp}}\epsilon^{-\frac{\compp}{\marp-\compp}}.
\end{equation*}
\end{lemma}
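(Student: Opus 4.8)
The plan is to recognize the stated bound as a rescaled instance of Young's inequality. Since $\marp>\compp>0$, the exponents $p=\marp/\compp$ and $q=\marp/(\marp-\compp)$ are conjugate: one checks $1/p+1/q=\compp/\marp+(\marp-\compp)/\marp=1$, and $\marp>\compp$ is exactly what guarantees $p>1$ and $q>1$. This is the natural pairing to use on the product $v\,t^{\compp/\marp}$, because raising $t^{\compp/\marp}$ to the power $p$ returns $t$.

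First I would introduce a free scaling constant $s>0$ and factor the left-hand side as $v\,t^{\compp/\marp}=\bigl(s\,t^{\compp/\marp}\bigr)\cdot(v/s)$. Applying Young's inequality $ab\le a^{p}/p+b^{q}/q$ with the conjugate pair above gives
\begin{equation*}
v\,t^{\compp/\marp}\leq \frac{1}{p}\,s^{\,p}\,t+\frac{1}{q}\,(v/s)^{q}.
\end{equation*}
Because $1/p=\compp/\marp<1$ and $1/q=(\marp-\compp)/\marp<1$, both fractional coefficients may be discarded, which only enlarges the right-hand side:
\begin{equation*}
v\,t^{\compp/\marp}\leq s^{\,p}\,t+(v/s)^{q}.
\end{equation*}

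Next I would pin down $s$ to match the target $\epsilon$-dependence. Choosing $s^{\,p}=\epsilon$, i.e. $s=\epsilon^{1/p}=\epsilon^{\compp/\marp}$, turns the first term into $\epsilon\,t$, while the second becomes
\begin{equation*}
(v/s)^{q}=v^{q}\,\epsilon^{-q/p}=v^{\marp/(\marp-\compp)}\,\epsilon^{-\compp/(\marp-\compp)},
\end{equation*}
since $q=\marp/(\marp-\compp)$ and $q/p=\compp/(\marp-\compp)$. Combining the two displays yields precisely the claimed inequality.

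I expect no genuine obstacle; the only point needing care is the exponent bookkeeping, namely verifying that the substitution $s=\epsilon^{\compp/\marp}$ produces the exponent $-\compp/(\marp-\compp)$ on $\epsilon$, and that the hypothesis $\marp>\compp$ is what legitimizes both the use of Young's inequality (valid conjugate exponents) and the discarding of the sub-unit factors $1/p,1/q$. A fully self-contained alternative would fix $v$ and $\epsilon$, minimize $g(t)=\epsilon\,t+v^{\marp/(\marp-\compp)}\epsilon^{-\compp/(\marp-\compp)}-v\,t^{\compp/\marp}$ over $t>0$ by setting $g'(t)=0$, and check that the minimal value is nonnegative; this reproduces the same bound but entails more computation, so I would favor the Young's-inequality argument above.
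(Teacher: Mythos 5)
Your proof is correct. One thing to be aware of: the paper never actually proves Lemma~\ref{lem:tech} --- it is imported with attribution from Tsybakov and van de Geer~\cite{tsybakov-vandegeer2005} and van de Geer~\cite{vandegeer2007} and used as a black box inside the proof of Theorem~\ref{thm:esterrorconverge} --- so there is no internal argument to compare yours against; your derivation supplies what the paper delegates to the literature. The Young's-inequality route is clean and the bookkeeping checks out: $p=\marp/\compp$ and $q=\marp/(\marp-\compp)$ are conjugate, $(s\,t^{\compp/\marp})^{p}=s^{p}t$, and the choice $s=\epsilon^{\compp/\marp}$ yields $q/p=\compp/(\marp-\compp)$, hence exactly the stated $\epsilon$-exponent. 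The one hypothesis you use beyond the literal statement is $\compp>0$ (needed for $p>1$); this is harmless in context, since the paper applies the lemma with $\compp=1-\covcomp/2$ and $0<\covcomp<2$, so $\compp\in(0,1)$, but it is worth flagging since the lemma as stated only demands $\marp>\compp$. For completeness, there is an even more elementary case-split proof that avoids Young's inequality altogether: if $v\,t^{\compp/\marp}\leq\epsilon t$ there is nothing to prove; otherwise $v\,t^{\compp/\marp}>\epsilon t$ forces $t<(v/\epsilon)^{\marp/(\marp-\compp)}$, and substituting this bound into $v\,t^{\compp/\marp}$ gives $v\,t^{\compp/\marp}< v^{\marp/(\marp-\compp)}\epsilon^{-\compp/(\marp-\compp)}$, which is the second term on the right. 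Either argument is a valid replacement for the citation.
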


\begin{lemma}\label{lem:empprocess}
Let $\mapempest$, $\mapsopt$, and $\mapcopt$ be as defined in~\eqref{equ:empirest2}, \eqref{equ:knownpq_KLest}, and~\eqref{equ:mapcopt} respectively.
Then under Conditions 1 and 2, we have
\begin{align}
&\expectv\left(\sup_{\mapout\in\candclass:\norm{\mapout-\mapsopt}_{L_2}> \denubound^{-1/2}\noobs^{-1/(2+\covcomp)}}
\frac{\rvert\genempir{\mapout}-\genempir{\mapsopt}\lvert}{\norm{\mapout-\mapsopt}_{L_2}^{1-\covcomp/2}}\right)^{r}\leq c_{2}^{r}, \label{equ:empprocess-a}\\
\intertext{and}
&\expectv\left(\sup_{\mapout\in\candclass:\norm{\mapout-\mapsopt}_{L_2}\leq \denubound^{-1/2}\noobs^{-1/(2+\covcomp)}}
\frac{\rvert\genempir{\mapout}-\genempir{\mapsopt}\lvert}{\noobs^{-(2-\covcomp)/2(2+\covcomp)}}\right)\leq c_{3}, \label{equ:empprocess-b}
\end{align}
for some positive constants $c_2, r,\text{and}~c_3$.
\end{lemma}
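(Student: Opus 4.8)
The plan is to recognize both quantities in Lemma~\ref{lem:empprocess} as moments of normalized weighted empirical processes over the increments of $\candclass$, and to bound them using van de Geer's concentration inequality for weighted empirical processes~\cite[Lemma 5.13]{vandegeer2000} (the ``Lemma~3'' invoked in the proof of Theorem~\ref{thm:esterrorconverge}) together with a peeling argument over dyadic shells of $\norm{\mapout-\mapsopt}_{L_2}$. First I would make the increment explicit: since $\genempir{\mapc}=\sqrt{\noobs}(\empirfdiv{\mapc}-\abbrefdiv{\mapc})$, combining~\eqref{equ:empirKL} and~\eqref{equ:knownpq_KLest} gives
\begin{equation*}
\genempir{\mapout}-\genempir{\mapsopt}=\sqrt{\noobs}\,(\probmeasureHzero_{\noobs}-\probmeasureHzero)(\log\mapout-\log\mapsopt)-\sqrt{\noobs}\,(\probmeasureHone_{\noobs}-\probmeasureHone)(\mapout-\mapsopt).
\end{equation*}
Because both terms carry the common denominator $\norm{\mapout-\mapsopt}_{L_2}^{1-\covcomp/2}$, the triangle inequality reduces the problem to controlling, separately, two single-sample weighted empirical processes indexed by $g^{p}_{\mapout}:=\log\mapout-\log\mapsopt$ (driven by the $P$-data) and $g^{q}_{\mapout}:=\mapout-\mapsopt$ (driven by the $Q$-data), the two moment bounds being recombined at the end by Minkowski's inequality.

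Second, I would check the three inputs needed for~\cite[Lemma 5.13]{vandegeer2000}. \emph{Boundedness}: since $\lbound\leq\mapout\leq\ubound$ and, by Condition~1, $\denlbound\leq\disHzero,\disHone\leq\denubound$, both $g^{p}_{\mapout}$ and $g^{q}_{\mapout}$ are uniformly bounded. \emph{Entropy}: $\candclass$ has bracketing complexity $\covcomp$, i.e. $H_{B}(\delta,\candclass,L_{2}(\probmeasureHzero))\leq\compconst\delta^{-\covcomp}$ with $0<\covcomp<2$; because $t\mapsto\log t$ and $t\mapsto t$ are Lipschitz on $[\lbound,\ubound]$ (with constants $1/\lbound$ and $1$), each bracket for $\candclass$ induces a bracket for the increment classes, so they inherit $H_{B}(\delta,\cdot,L_{2})\leq\compconst'\delta^{-\covcomp}$. \emph{Norm comparability}: by the same Lipschitz bounds together with Condition~1, $\norm{g^{p}_{\mapout}}_{L_{2}}$ and $\norm{g^{q}_{\mapout}}_{L_{2}}$ are each comparable to $\norm{\mapout-\mapsopt}_{L_{2}}$ up to constants depending only on $\lbound,\ubound,\denlbound,\denubound$. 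This comparability is what lets the increment norm in the denominators be replaced by $\norm{\mapout-\mapsopt}_{L_{2}}$ at the cost of absorbing these constants.

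Third, I would apply the concentration inequality. The restriction $\norm{\mapout-\mapsopt}_{L_{2}}>\denubound^{-1/2}\noobs^{-1/(2+\covcomp)}$ in~\eqref{equ:empprocess-a} places us in the ``radius not too small'' regime of~\cite[Lemma 5.13]{vandegeer2000}, since with entropy exponent $\covcomp$ the balancing relation $\int_{0}^{\delta}\sqrt{H_{B}(u,\cdot,L_{2})}\,du\asymp\sqrt{\noobs}\,\delta^{2}$ has solution $\delta\asymp\noobs^{-1/(2+\covcomp)}$. A peeling argument over dyadic shells $2^{-(j+1)}<\norm{\mapout-\mapsopt}_{L_{2}}\leq2^{-j}$, applying the inequality on each shell and summing the tail probabilities, then yields a sub-Gaussian bound $\Pr(Z^{(a)}_{\noobs}>t)\leq C_{1}\exp(-t^{2}/C_{2})$, where $Z^{(a)}_{\noobs}$ denotes the supremum in~\eqref{equ:empprocess-a}; integrating $\expectv(Z^{(a)}_{\noobs})^{r}=r\int_{0}^{\infty}t^{r-1}\Pr(Z^{(a)}_{\noobs}>t)\,dt$ gives the moment bound $c_{2}^{r}$. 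For~\eqref{equ:empprocess-b} I would observe that the constant denominator equals $\noobs^{-(2-\covcomp)/(2(2+\covcomp))}=(\noobs^{-1/(2+\covcomp)})^{1-\covcomp/2}$, i.e. the value of $\norm{\cdot}^{1-\covcomp/2}$ at the boundary radius, so the innermost shell $\norm{\mapout-\mapsopt}_{L_{2}}\leq\denubound^{-1/2}\noobs^{-1/(2+\covcomp)}$ is handled by a single application of the same inequality, producing the bounded expectation $c_{3}$.

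The hardest part will be the correct execution of the peeling step with matching constants: verifying that the balancing radius really is $\noobs^{-1/(2+\covcomp)}$, that the summed shell probabilities remain sub-Gaussian, and that the entropy, variance, and boundedness inputs to~\cite[Lemma 5.13]{vandegeer2000} apply cleanly to each of the two single-sample processes. The norm-comparability bookkeeping that converts the $L_{2}$ norms of $g^{p}_{\mapout}$ and $g^{q}_{\mapout}$ back into $\norm{\mapout-\mapsopt}_{L_{2}}$ is routine but must be tracked throughout so that $c_{2}$ and $c_{3}$ depend only on the allowed parameters $\lbound,\ubound,\denlbound,\denubound,\covcomp$.
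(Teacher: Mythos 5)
Your proposal follows essentially the same route as the paper's proof: the same triangle-inequality split of $\genempir{\mapout}-\genempir{\mapsopt}$ into a $P$-driven process indexed by $\log\mapout-\log\mapsopt$ and a $Q$-driven process indexed by $\mapout-\mapsopt$, the same verification of boundedness, bracketing complexity $0<\covcomp<2$, and $L_2$-norm comparability via Condition~1, the same application of van de Geer's concentration result (Lemma~\ref{lem:vandegeer5-13}) in its large-radius form for~\eqref{equ:empprocess-a} and small-radius form for~\eqref{equ:empprocess-b}, and the same conversion of exponential tails into moment bounds. The only (harmless) difference is that the dyadic peeling you plan to carry out is already built into the weighted form of that lemma, so the paper applies it directly as a black box to each of the two suprema, obtaining sub-exponential rather than sub-Gaussian tails, which suffice equally well for the moment bounds $c_2^r$ and $c_3$.
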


\begin{proof}
\noindent\textbf{Case 1:} Equation~\eqref{equ:empprocess-a}.
To compact notation, let $\natural$ denote the inequality ${\norm{\mapout-\mapsopt}_{L_2}> \denubound^{-1/2}\noobs^{-1/(2+\covcomp)}}$ and recall that
\begin{equation*}\begin{split}
\rvert\genempir{\mapout}-\genempir{\mapsopt}\lvert = 
\sqrt{\noobs}&\left\lvert\int(\log{\mapout}-\log{\mapsopt})~d(\probmeasureHzero_{\noobs}-\probmeasureHzero) \right. \\
& \quad \left.+ \int (\mapout-\mapsopt)~d(\probmeasureHone_{\noobs}-\probmeasureHone)\right\rvert.
\end{split}\end{equation*}
By Condition 1, we have
\begin{equation}
\begin{split}\label{equ:empprocess1}
\norm{\mapout-\mapsopt}_{L_2(\probmeasureHzero)}\leq \denubound^{1/2} \,\norm{\mapout-\mapsopt}_{L_2} \\
\norm{\mapout-\mapsopt}_{L_2(\probmeasureHone)}\leq \denubound^{1/2} \,\norm{\mapout-\mapsopt}_{L_2},
\end{split}
\end{equation}
for $\mapout\in\candclass$.
Consequently, we can write
\begin{align}
\sup_{\mapout\in\candclass:\, \natural}&\dfrac{\rvert\genempir{\mapout}-\genempir{\mapsopt}\lvert}{\norm{\mapout-\mapsopt}_{L_2}^{1-\covcomp/2}} \nonumber \\
&\leq \sup_{\mapout\in\candclass: \,\natural}\dfrac{\sqrt{\noobs}\left\lvert\displaystyle\int(\log{\mapout}-\log{\mapsopt})~d(\probmeasureHzero_{\noobs}-\probmeasureHzero)\right\rvert}{\norm{\mapout-\mapsopt}_{L_2}^{1-\covcomp/2}} \nonumber \\
&\qquad \quad + \sup_{\mapout\in\candclass:\, \natural} \dfrac{\sqrt{\noobs}\left\lvert\displaystyle\int(\mapout-\mapsopt)~d(\probmeasureHone_{\noobs}-\probmeasureHone)\right\rvert}{\norm{\mapout-\mapsopt}_{L_2}^{1-\covcomp/2}} \nonumber \\
\begin{split}\leq \sup_{\mapout\in\candclass:\, \natural}\dfrac{\sqrt{\noobs}\left\lvert\displaystyle\int(\log{\mapout}-\log{\mapsopt})~d(\probmeasureHzero_{\noobs}-\probmeasureHzero)\right\rvert}{\denubound^{(\covcomp-2)/4}\norm{\mapout-\mapsopt}_{L_2(\probmeasureHzero)}^{1-\covcomp/2}} \\
\qquad \quad + \sup_{\mapout\in\candclass:\, \natural} \dfrac{\sqrt{\noobs}\left\lvert\displaystyle\int(\mapout-\mapsopt)~d(\probmeasureHone_{\noobs}-\probmeasureHone)\right\rvert}{\denubound^{(\covcomp-2)/4}\norm{\mapout-\mapsopt}_{L_2(\probmeasureHone)}^{1-\covcomp/2}}
\end{split}\label{equ:empbreakup}
\end{align}
where the last inequality follows from $\eqref{equ:empprocess1}$.

We now want to apply a probability inequality due to van de Geer~\cite{vandegeer2000} (stated as Lemma~\ref{lem:vandegeer5-13} below) to the two terms in~\eqref{equ:empbreakup}.
The result requires $\candclass$ and $\widetilde{\candclass}=\{\log \phi: \phi\in\Phi\}$ to have a bracketing complexity satisfying $0<\alpha <2$.
$\Phi$ satisfies the requirement by construction which implies the same is true for $\widetilde{\Phi}$.
The result also requires that the differences $(\log \phi - \log \phi^*)$ and $(\phi-\phi^*)$ are upper bounded.
This follows from the definition of $\Phi$.
Furthermore, note that the proper form of the condition under the supremum follows from~\eqref{equ:empprocess1}.

Applying Lemma~\ref{lem:vandegeer5-13} to each term in~\eqref{equ:empbreakup}, we obtain
\begin{align*}
\Pr\Biggl(\sup_{\mapout\in\candclass: \, \natural}
&\dfrac{\sqrt{\noobs}\Bigl\lvert\displaystyle\int(\log{\mapout}-\log{\mapsopt})~d(\probmeasureHzero_{\noobs}-\probmeasureHzero)\Bigr\rvert}{\norm{\mapout-\mapsopt}_{L_2(\probmeasureHzero)}^{1-\covcomp/2}}\geq \denubound^{(\covcomp-2)/4}\, t \Biggr)\\
& \leq  \tilde{c}\,\exp{\biggl(-\frac{t}{c^2}\biggr)}
\end{align*}
and
\begin{align*}
\Pr\Biggl(\sup_{\mapout\in\candclass:\,\natural}
& \dfrac{\sqrt{\noobs}\Bigl\lvert\displaystyle\int(\mapout-\mapsopt)~d(\probmeasureHone_{\noobs}-\probmeasureHone)\Bigr\rvert}{\norm{\mapout-\mapsopt}_{L_2(\probmeasureHone)}^{1-\covcomp/2}}\geq \denubound^{(\covcomp-2)/4}\, t \Biggr) \\
& \leq \tilde{c}\,\exp{\biggl(-\frac{t}{c^2}\biggr)}
\end{align*}
for all $t\geq c$, some constant $\tilde{c}>0$, and $\noobs$ sufficiently large.
Consequently,
\begin{align}
\expectv \Biggr(\sup_{\mapout\in\candclass:\,\natural}\dfrac{\sqrt{\noobs}\left\lvert\displaystyle\int(\log{\mapout}-\log{\mapsopt})~d(\probmeasureHzero_{\noobs}-\probmeasureHzero)\right\rvert}{\norm{\mapout-\mapsopt}_{L_2(\probmeasureHzero)}^{1-\covcomp/2}}\Biggl)^r \leq c_{2,1} \nonumber \\
\intertext{and}
\expectv \Biggr(\sup_{\mapout\in\candclass:\, \natural}\dfrac{\sqrt{\noobs}\left\lvert\displaystyle\int(\mapout-\mapsopt)~d(\probmeasureHone_{\noobs}-\probmeasureHone)\right\rvert}{\norm{\mapout-\mapsopt}_{L_2(\probmeasureHone)}^{1-\covcomp/2}}\Biggl)^r \leq c_{2,2} \nonumber
\end{align}
for all $r>0$ and some finite positive constants $c_{2,1}$ and $c_{2,2}$.
Therefore
\begin{equation*}
\expectv\left(\sup_{\mapout\in\candclass:\, \natural}\frac{\rvert\genempir{\mapout}-\genempir{\mapsopt}\lvert}{\norm{\mapout-\mapsopt}_{L_2}^{1-\covcomp/2}}\right)^{r}\leq c_{2}^{r},
\end{equation*}
for some finite positive constant $c_2$.

\vspace*{10pt}
\noindent\textbf{Case 2:} Equation~\eqref{equ:empprocess-b}.
Let $\flat$ denote the inequality $\norm{\mapout-\mapsopt}_{L_2}\leq \denubound^{-1/2}\noobs^{-1/(2+\covcomp)}$.
From the definition of the empirical process $\rvert\genempir{\mapout}-\genempir{\mapsopt}\lvert$, we have
\begin{align}
\sup_{\mapout\in\candclass:\,\flat}&\frac{\rvert\genempir{\mapout}-\genempir{\mapsopt}\lvert}{\noobs^{-(2-\covcomp)/2(2+\covcomp)}}\nonumber \\
\begin{split}\leq \sup_{\mapout\in\candclass:\,\flat}\dfrac{\sqrt{\noobs}\left\lvert\displaystyle\int(\log{\mapout}-\log{\mapsopt})~d(\probmeasureHzero_{\noobs}-\probmeasureHzero)\right\rvert}{\noobs^{-(2-\covcomp)/2(2+\covcomp)}} \\
\qquad \quad + \sup_{\mapout\in\candclass:\,\flat} \dfrac{\sqrt{\noobs}\left\lvert\displaystyle\int(\mapout-\mapsopt)~d(\probmeasureHone_{\noobs}-\probmeasureHone)\right\rvert}{\noobs^{-(2-\covcomp)/2(2+\covcomp)}}.
\end{split}\label{equ:empbreakup2}
\end{align}
Apply Lemma~\ref{lem:vandegeer5-13} to each of the terms in~\eqref{equ:empbreakup2} to get
\begin{align*}
 \Pr\Biggl(\sup_{\mapout\in\candclass:\,\flat}
& \Bigl\lvert\displaystyle\int(\log{\mapout}-\log{\mapsopt})~d(\probmeasureHzero_{\noobs}-\probmeasureHzero)\Bigr\rvert\geq t\,\noobs^{-2/(2+\covcomp)} \Biggr) \\
& \leq c\,\exp{\Biggl(-\frac{t\,n^{\frac{\covcomp}{2+\covcomp}}}{c^2}\Biggr)}
\end{align*}
and
\begin{align*}
\Pr\Biggl(\sup_{\mapout\in\candclass:\,\flat}
& \Bigl\lvert\displaystyle\int(\mapout-\mapsopt)~d(\probmeasureHone_{\noobs}-\probmeasureHone)\Bigr\rvert\geq t\, \noobs^{-2/(2+\covcomp)} \Biggr) \\
& \leq c\,\exp{\Biggl(-\frac{t\,n^{\frac{\covcomp}{2+\covcomp}}}{c^2}\Biggr)},
\end{align*}
for all $t\geq c$ and for $\noobs$ sufficiently large.
Therefore, for $\noobs$ sufficiently large
\begin{equation*}
\expectv\left(\sup_{\mapout\in\candclass:\,\flat}\frac{\rvert\genempir{\mapout}-\genempir{\mapsopt}\lvert}{\noobs^{-(2-\covcomp)/2(2+\covcomp)}}\right)\leq c_{3},
\end{equation*}
for some positive constant $c_3$.
\end{proof}

\begin{lemma}[van de Geer~\cite{vandegeer2000}, Lemma~5.13]\label{lem:vandegeer5-13}
Let $X_1,\dotsc,X_n$ be an independent and identically distributed sequence of random variables on a probability space $(\mathcal{X},\mathcal{A},P)$.
Let $\mathcal{G}\subset L_2(P)$ be a collection of functions and define the empirical process indexed by $\mathcal{G}$ as
\begin{equation*}
\nu_n=\Bigl\{\nu_n(g)=\sqrt{n}\int g~d(P_n-P):~g\in\mathcal{G}\Bigr\}.
\end{equation*}
Let $\lvert g \rvert_{\infty}=\sup_{x\in\mathcal{X}}~\lvert g(x)\rvert$ denote the supremum norm
 and suppose $\sup_{g\in\mathcal{G}}~\lvert g-g_0 \rvert_{\infty} \leq K$, for some fixed element $g_0\in\mathcal{G}$ and some constant $K$.
Furthermore, suppose
\begin{equation*}
H_B(\delta,\mathcal{G},L_2(P))\leq A\delta^{-\rho}, \quad\text{for all~} \delta>0,
\end{equation*}
for some $0<\rho<2$ and some constant $A>0$.
Then for some constant $c$ depending on $\rho$ and $A$, we have for all $t\geq c$ and for $n$ sufficiently large,
\begin{equation*} \begin{split}
\Pr\Biggl(\sup_{g\in\mathcal{G},~\norm{g-g_0}\leq n^{-\frac{1}{2+\rho}}} &\left\lvert\int(g-g_0)d(P_n-P)\right\rvert\geq t\,n^{-\frac{2}{2+\rho}}\Biggr) \\ 
& \leq c\,\exp{\Biggl(-\frac{t\,n^{\frac{\rho}{2+\rho}}}{c^2}\Biggr)} 
\end{split} \end{equation*}
and
\begin{equation*} \begin{split}
\Pr\Biggr(\sup_{g\in\mathcal{G},~\norm{g-g_0}> n^{-\frac{1}{2+\rho}}} & \frac{\lvert\nu_n(g)-\nu_n(g_0)\lvert}{\norm{g-g_0}^{1-\frac{\rho}{2}}} \geq t \Biggr) \\
& \leq c\,\exp{\biggl(-\frac{t}{c^2}\biggr)},
\end{split}\end{equation*}
where the norms $\norm{g-g_0}$ are norms in $L_2(P)$.
\end{lemma}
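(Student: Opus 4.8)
The plan is to prove the two displays with the standard empirical-process machinery that underlies this result: a bracketing chaining bound to control the mean of the relevant supremum, a Bernstein/Talagrand-type concentration inequality to upgrade that mean bound into an exponential tail on a single $L_2(P)$-ball, and a peeling (slicing) device to pass from one ball to the normalized supremum over all of $\mathcal{G}$. First I would center the class: writing $h=g-g_0$ and $\mathcal{H}=\{g-g_0:g\in\mathcal{G}\}$, one has $\nu_n(g)-\nu_n(g_0)=\nu_n(h)$, every $h$ obeys $|h|_\infty\leq K$, and $\mathcal{H}$ inherits the bound $H_B(\delta,\mathcal{H},L_2(P))\leq A'\delta^{-\rho}$ (shifting the brackets by the fixed $g_0$ only changes constants).

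The technical core is a single-ball deviation inequality. For $\mathcal{H}_R=\{h\in\mathcal{H}:\norm{h}\leq R\}$, a bracketing form of Dudley's entropy integral gives $\expectv\sup_{\mathcal{H}_R}|\nu_n(h)|\leq c\int_0^R\sqrt{H_B(u,\mathcal{H},L_2(P))}\,du\leq c'R^{1-\rho/2}$, the integral converging precisely because $\rho<2$. I would then invoke a concentration inequality for the supremum of a bounded empirical process (Talagrand/Bousquet, or van de Geer's own Bernstein-type bound) with weak-variance proxy $\sup_{\mathcal{H}_R}\mathrm{Var}(h(X_1))\leq R^2$ and uniform increment $K/\sqrt{n}$, obtaining $\Pr(\sup_{\mathcal{H}_R}|\nu_n(h)|\geq c'R^{1-\rho/2}+x)\leq c\exp(-x^2/(c^2(R^2+xK/\sqrt{n})))$. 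The first display then follows by specializing to $R=\noobs^{-1/(2+\rho)}$ and the total level $t\,R^{1-\rho/2}$ for this $\sqrt{\noobs}$-scaled process: the algebra $\sqrt{\noobs}\,\noobs^{-2/(2+\rho)}=R^{1-\rho/2}$ and $R^{-\rho}=R^{1-\rho/2}\sqrt{\noobs}=\noobs^{\rho/(2+\rho)}$ holds at this critical radius, aligning both the threshold and the claimed exponent, with the bounded-increment term furnishing the rate $\noobs^{\rho/(2+\rho)}$ multiplying $t$ in the regime $t\geq c$.

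For the second display I would slice $\{h:\norm{h}>\delta_0\}$, $\delta_0=\noobs^{-1/(2+\rho)}$, into dyadic shells $\mathcal{S}_j=\{h:2^{j-1}\delta_0<\norm{h}\leq 2^j\delta_0\}$ for $j\geq 1$, of which there are $O(\log\noobs)$ since $\norm{h}\leq|h|_\infty\leq K$. On $\mathcal{S}_j$ the normalization satisfies $\norm{h}^{1-\rho/2}\geq(2^{j-1}\delta_0)^{1-\rho/2}$, so the event that the normalized supremum exceeds $t$ is contained in $\{\sup_{\mathcal{H}_{R_j}}|\nu_n(h)|\geq t\,(2^{j-1}\delta_0)^{1-\rho/2}\}$ with $R_j=2^j\delta_0$. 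The purpose of the denominator $\norm{h}^{1-\rho/2}$ is precisely that this threshold exceeds the shell's centering term $c'R_j^{1-\rho/2}$ by the factor $t/c$ uniformly in $j$; feeding it into the single-ball inequality yields per-shell tails whose exponents grow geometrically in $j$, so a union bound over the shells is dominated by the outermost (largest-radius) shell and the summation collapses to the stated $c\exp(-t/c^2)$.

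I expect the main obstacle to be the single-ball deviation inequality itself — arranging the chaining to produce the correct $R^{1-\rho/2}$ modulus of continuity and then upgrading it to an exponential tail with the right interplay between the Gaussian variance proxy $R^2$ and the bounded-increment term $K/\sqrt{\noobs}$ — together with the bookkeeping in the peeling step that keeps the per-shell exponents summable and the overall exponent the advertised linear function of $t$. The requirement that $\noobs$ be sufficiently large enters to ensure that for every relevant shell the deviation level lies in the range where the single-ball inequality is valid, and the hypothesis $0<\rho<2$ is exactly what makes both the entropy integral converge and the shell sum geometric.
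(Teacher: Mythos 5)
First, a point of comparison: the paper does not prove this statement at all. It is quoted as a known result --- Lemma 5.13 of van de Geer's book \cite{vandegeer2000} --- and is used purely as an external tool in the proof of Lemma~\ref{lem:empprocess}, so there is no internal proof to measure your attempt against. Judged on its own merits, your sketch reconstructs what is essentially van de Geer's own argument: a bracketing chaining bound giving the modulus $R^{1-\rho/2}$ over an $L_2(P)$-ball of radius $R$ (the entropy integral converging because $\rho<2$), a Bernstein-type concentration inequality (her Theorem~5.11; your Talagrand/Bousquet substitute plays the identical role for a uniformly bounded class), and a peeling device to pass to the weighted supremum. Your critical-radius algebra is also correct: $\sqrt{n}\,n^{-2/(2+\rho)}=R^{1-\rho/2}$ and $R^{-\rho}=n^{\rho/(2+\rho)}$ at $R=n^{-1/(2+\rho)}$, which is exactly what aligns the threshold and exponent of the first display.

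One step of the peeling bookkeeping is stated backwards, although your conclusion survives. With your indexing $R_j=2^j\delta_0$, the sub-Gaussian part of the per-shell exponent is of order $t^2R_j^{-\rho}$, which \emph{decreases} geometrically as $j$ (the radius) grows; it is the Bernstein part $tR_j^{1-\rho/2}\sqrt{n}/K$ that grows with $j$. Consequently the union bound is dominated by the outermost shells, of constant radius of order $K$, precisely because they carry the \emph{smallest} exponent, of order $t^2K^{-\rho}$ --- a quantity independent of $n$, which is also why the second display, unlike the first, has no $n$-dependence in its exponent. The shell sum still collapses because, moving inward from the outermost shell, the exponents grow geometrically, so the tails are summable and bounded by $C\exp(-\mathrm{const}\cdot t^2)\leq c\exp(-t/c^2)$ for $t\geq c$. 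As you wrote it (``exponents grow geometrically in $j$, so the union bound is dominated by the outermost shell''), the premise contradicts the conclusion. With that direction fixed, and with the routine care about where the Bernstein versus sub-Gaussian regime applies (which is exactly where ``$t\geq c$'' and ``$n$ sufficiently large'' enter), your outline is a faithful plan for proving the quoted lemma.
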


\subsection{Proof of Theorem~\ref{thm:approxerrorconverge}}\label{pf:approxerror}
Recall the definition of $\PCclass$ from Section~\ref{subsect:approxerror}. 
We will need the following lemma.
\begin{lemma}[\cite{ruicastro_phd_thesis2007},~Lemma 5,~p.~121]\label{lem:rui}
There is a RDP such that the cells intersecting $B(\mapcopt)$ are at depth $\RDPdepth$ and all the other cells are at depths no greater than $\RDPdepth$.
Denote the smallest such RDP by $\partition^{*}_{\RDPdepth}$.
Then $\partition^{*}_{\RDPdepth}$ has at most $2^{2\ddimen}\beta 2^{(\ddimen-1)\RDPdepth}$ cells intersecting $B(\mapcopt)$.
\end{lemma}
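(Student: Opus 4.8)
The plan is to construct $\partition^{*}_{\RDPdepth}$ explicitly by a greedy top-down refinement and then count its boundary cells using the covering condition built into the definition of $\PCclassabr$.

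First I would build the partition using rule 2 of the RDP definition. Starting from the root cell $[0,1]^{\ddimen}$, I recursively split a cell into its $2^{\ddimen}$ dyadic children if and only if the cell both intersects the boundary set $B(\mapcopt)$ and sits at depth strictly less than $\RDPdepth$; a cell is declared a leaf as soon as it either misses $B(\mapcopt)$ or reaches depth $\RDPdepth$. By construction every leaf meeting $B(\mapcopt)$ has side length $2^{-\RDPdepth}$ while every other leaf has depth at most $\RDPdepth$, which is exactly the stated structural property. To see this is the \emph{smallest} such RDP, note that any RDP with the property must, for each depth-$\RDPdepth$ cell touching $B(\mapcopt)$, also contain all of that cell's dyadic ancestors as internal (split) nodes; the greedy construction splits no cell lacking a boundary-touching descendant, so it uses the minimal possible set of splits.

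Next I would bound the number of depth-$\RDPdepth$ cells that meet $B(\mapcopt)$, where the key input is $N(r)\leq \beta\, r^{-(\ddimen-1)}$ from the definition of $\PCclassabr$. Evaluating this at a scale $r$ comparable to $2^{-\RDPdepth}$ (equivalently the cell diameter $\sqrt{\ddimen}\,2^{-\RDPdepth}$) yields a cover of $B(\mapcopt)$ by a number of balls of order $\beta\,2^{(\ddimen-1)\RDPdepth}$. Every depth-$\RDPdepth$ cell meeting $B(\mapcopt)$ contains a point of $B(\mapcopt)$ and hence meets at least one of these balls, so it suffices to bound how many depth-$\RDPdepth$ cells a single ball can intersect. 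A ball of diameter $\mathcal{O}(2^{-\RDPdepth})$ sits inside an axis-aligned box of comparable side length, and along each of the $\ddimen$ coordinate axes such a box straddles only a bounded number of grid lines of the depth-$\RDPdepth$ lattice; the product over the $\ddimen$ coordinates bounds the cells-per-ball by a dimensional constant. Multiplying the two bounds and absorbing the slack into the constant gives at most $2^{2\ddimen}\beta\,2^{(\ddimen-1)\RDPdepth}$ boundary cells.

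The main obstacle I anticipate is the geometric bookkeeping in the counting step: the covering condition is a metric statement about balls whereas the object to be counted is a set of dyadic cells, and one must convert between the two while pinning the scale exactly at $2^{-\RDPdepth}$ so that the exponent $-(\ddimen-1)$ in $N(r)$ produces precisely $2^{(\ddimen-1)\RDPdepth}$. Controlling how far a single covering ball can straddle grid lines in every coordinate direction — and choosing $r$ so that this straddling, together with the $\sqrt{\ddimen}$ factor relating cell side to cell diameter, contributes no more than the clean constant $2^{2\ddimen}$ — is the delicate part; the minimality claim also requires the routine but not vacuous observation that the greedy refinement never splits a cell without a boundary-touching descendant.
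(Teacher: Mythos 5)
You should note at the outset that the paper itself contains no proof of this lemma: it is imported verbatim from the cited thesis (\cite{ruicastro_phd_thesis2007}, Lemma 5), so the only benchmark is that cited argument, which is exactly the greedy-refinement-plus-covering argument you give; your proposal is correct. On the one step you flag as delicate, the bookkeeping closes cleanly if you evaluate the covering condition at $r=2^{-\RDPdepth}$ (the cell \emph{side}, not the cell diameter $\sqrt{\ddimen}\,2^{-\RDPdepth}$): then $N(2^{-\RDPdepth})\leq \beta\,2^{(\ddimen-1)\RDPdepth}$, and each covering ball of diameter $2^{-\RDPdepth}$ lies in an axis-aligned box of side $2^{-\RDPdepth}$, whose projection onto each coordinate axis meets at most two depth-$\RDPdepth$ grid intervals, hence the ball meets at most $2^{\ddimen}$ cells; multiplying gives $2^{\ddimen}\beta\,2^{(\ddimen-1)\RDPdepth}\leq 2^{2\ddimen}\beta\,2^{(\ddimen-1)\RDPdepth}$, with a factor $2^{\ddimen}$ to spare. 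Your parenthetical ``equivalently the cell diameter'' is precisely where the constant could slip: with $r=\sqrt{\ddimen}\,2^{-\RDPdepth}$ the cells-per-ball count grows like $(\lceil\sqrt{\ddimen}\rceil+1)^{\ddimen}$, which exceeds $2^{2\ddimen}$ once $\ddimen\geq 10$, and one must then invoke the factor $\ddimen^{-(\ddimen-1)/2}$ gained inside $N(r)$ to recover the stated constant --- it does work out, but the two scale choices are not interchangeable, so the clean route is $r=2^{-\RDPdepth}$ exactly.
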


Let $\mapout'$ denote the $\asize$-level piecewise constant function defined by,
\begin{equation}\label{equ:phiprimedef}
\mapout'(\inputsv)=\sum_{\partindex=0}^{\asize-1}\coeff_{\region'_{\partindex}} \indicator{\region'_{\partindex}}(\inputsv), \quad \coeff_{\region'_{\partindex}}=\frac{\probmeasureHzero(\region'_{\partindex})}{\probmeasureHone(\region'_{\partindex})}
\end{equation}
where each member region $\region'_{\partindex}$ of the partition $\{R_{i}'\}$ is composed of a union of cells $\cell\in \partition^{*}_{\RDPdepth}$.
Furthermore, let $\phi'$ satisfy the condition that the cells $\cell\in\partition^{*}_{\RDPdepth}$ contained in $\region_{i}'/B(\mapcopt)$ are also contained in $\optcell_{\partindex}$.
More concisely, we write $\cell \subseteq \region'_{\partindex}/B(\mapcopt) \Rightarrow \cell \subseteq \optcell_{\partindex}/B(\mapcopt)$.
In words, this last condition means that the partitions $\{\region'_{\partindex}\}$ and $\{\optcell_{\partindex}\}$ coincide except possibly on the boundary $B(\mapcopt)$.

First, observe that $\abbrefdiv{\mapcopt}-\abbrefdiv{\mapsopt}\leq \abbrefdiv{\mapcopt}-\abbrefdiv{\mapout'}$ since the divergence between the pmfs induced by $\mapout'$ is necessarily less than or equal to the that induced by the best in class quantization rule $\mapsopt$.
(This inequality also follows from the Data Processing Theorem~\cite[pp.~18-22]{kullback1959}.) 

Next, upper bound the difference $\abbrefdiv{\mapcopt}-\abbrefdiv{\mapout'}$ by the $L_{1}$-norm of $(\mapcopt-\mapsopt)$:
\begin{align}
\abbrefdiv{\mapcopt}-\abbrefdiv{\mapout'}&= \int_{[0,1]^{\ddimen}}\log{\frac{\mapcopt}{\mapout'}}~d\probmeasureHzero - \int_{[0,1]^{\ddimen}} (\mapcopt-\mapout')d\probmeasureHone  \nonumber \\
&\leq \int (\frac{\mapcopt}{\mapout'}-1)~d\probmeasureHzero - \int (\mapcopt-\mapout')d\probmeasureHone \nonumber \\
&= \int \frac{1}{\mapout'}(\mapcopt-\mapout')d\probmeasureHzero - \int (\mapcopt-\mapout')d\probmeasureHone \nonumber \\
&\leq   \frac{\denubound}{\lbound} \Big\lvert\int(\mapcopt-\mapout')d\inputsv\Big\rvert + \denlbound \Big\lvert\int(\mapcopt-\mapout')d\inputsv\Big\rvert \nonumber \\
&\leq \frac{\denubound+\denlbound\lbound}{\lbound}~\norm{\mapcopt-\mapout'}_{L_{1}},\label{equ:rubbish}
\end{align}
where the first inequality follows from the fact that $\log{x}\leq x-1$, for $x>0$, and the second inequality follows from the bounds on~$\mapout$, $\disHzero$, and $\disHone$.

Rewrite the $L_{1}$-norm as
\begin{align}
\norm{\mapcopt-\mapout'}_{L_{1}}&=\int_{[0,1]^{\ddimen}}\lvert \mapcopt(\inputsv)-\mapout'(\inputsv)\rvert~d\inputsv \nonumber \\
&=\sum_{\partindex=0}^{\asize-1} \int_{\region'_{\partindex}} \lvert\mapcopt(\inputsv)-\mapout'(\inputsv)\rvert~d\inputsv \nonumber \\
\begin{split} &= \sum_{\partindex=0}^{\asize-1} \Bigg[\sum_{\cell\subseteq\region'_{\partindex}/B(\mapcopt)} \int_{\cell}\lvert\mapcopt(\inputsv)-\mapout'(\inputsv)\rvert~d\inputsv  \\
& \quad + \sum_{\cell\subseteq\region'_{\partindex}(B(\mapcopt))} \int_{\cell}\lvert\mapcopt(\inputsv)-\mapout'(\inputsv)\rvert~d\inputsv \Bigg]\end{split}\label{equ:approx99}
\end{align}
Here, $\cell\subseteq\region'_{\partindex}/B(\mapcopt)$ means all cells $\cell$ that are a subset of $\region'_{\partindex}$ which do not intersect the boundary $B(\mapcopt)$.
Similarly, $\cell\subseteq\region'_{\partindex}(B(\mapcopt))$ means all cells $\cell$ that are subsets of $\region'_{\partindex}$ which do intersect $B(\mapcopt)$.

Consider the second summation within the brackets in~\eqref{equ:approx99}.
By the boundedness assumptions on $\mapcopt$ and $\mapout'$, the integrand can be upper bounded by $\ubound$.
Therefore,
\begin{align}
\sum_{\cell\subseteq\region'_{\partindex}(B(\mapcopt))} \int_{\cell}\lvert\mapcopt(\inputsv)&-\mapout'(\inputsv)\rvert~d\inputsv \nonumber \\
& \leq \ubound \sum_{\cell\subseteq\region'_{\partindex}(B(\mapcopt))} \text{Vol}(\cell) \nonumber \\
&\leq \ubound 2^{2\ddimen} \beta 2^{(\ddimen-1)\RDPdepth} 2^{-\ddimen\RDPdepth}  \nonumber \\
&=M 2^{2d}\beta \, 2^{-J},\label{equ:firstterm}
\end{align}
where the second inequality follows from Lemma~\ref{lem:rui} and the fact that the volume of one cell $S$ is $2^{-dJ}$.

Now, consider the first summation within the brackets in~\eqref{equ:approx99}.
For all $\cell\subseteq \region'_{\partindex}$ (and in particular for all $\cell\subseteq \region'_{\partindex}/ B(\mapcopt)$), $\mapout'$ equals $\probmeasureHone(\region'_{\partindex})/\probmeasureHzero(\region'_{\partindex})$ (recall~\eqref{equ:phiprimedef}).
Likewise, by the definition of $\phi'$, $\mapcopt$ is also constant for all $\cell\subseteq\region'_{\partindex}/B(\mapcopt)$.
Therefore, we have
\begin{align}
\sum_{\cell\subseteq\region'_{\partindex}/B(\mapcopt)} &\int_{\cell}\lvert\mapcopt(\inputsv)-\mapout'(\inputsv)\rvert~d\inputsv \nonumber \\
&=\sum_{\partindex=0}^{\asize-1} \Biggl\lvert \frac{\probmeasureHzero(\optcell_{\partindex})}{\probmeasureHone(\optcell_{\partindex})} - \frac{\probmeasureHzero(\region'_{\partindex})}{\probmeasureHone(\region'_{\partindex})} \Biggr\rvert  \sum_{\cell\subseteq\region'_{\partindex}/B(\mapcopt)}\text{Vol}(\cell) \nonumber \\
&\leq \sum_{\partindex=0}^{\asize-1} \Biggl\lvert \frac{\probmeasureHzero(\optcell_{\partindex})\probmeasureHone(\region'_{\partindex})-\probmeasureHzero(\region'_{\partindex})\probmeasureHone(\optcell_{\partindex})}{\probmeasureHone(\optcell_{\partindex})\probmeasureHone(\region'_{\partindex})} \Biggr\rvert \text{Vol}(\region'_{\partindex}) \nonumber \\ 
&\leq\displaystyle\frac{1}{\denlbound}\sum_{\partindex=0}^{\asize-1} \Biggl\lvert \frac{\probmeasureHzero(\optcell_{\partindex})\probmeasureHone(\region'_{\partindex})-\probmeasureHzero(\region'_{\partindex})\probmeasureHone(\optcell_{\partindex})}{\probmeasureHone(\optcell_{\partindex})} \Biggr\rvert. \label{equ:blah2}
\end{align}
Using the inequalities,
\begin{equation}\label{equ:blah}\begin{split}
&\probmeasureHone(\region'_{\partindex})\leq \probmeasureHone(\optcell_{\partindex}) +  \sum_{S\in R_{i}'(B(A^{*}_i))}Q(S)\\
&\probmeasureHzero(\region'_{\partindex})\geq \probmeasureHzero(\optcell_{\partindex}) - \sum_{S\in R_{i}'(B(A^{*}_i))}P(S), 
\end{split}\end{equation}
we upper bound each term in the summation in~\eqref{equ:blah2}
\begin{align*}
\frac{1}{\probmeasureHone(\optcell_{\partindex})} &\bigl\lvert \probmeasureHzero(\optcell_{\partindex})\probmeasureHone(\region'_{\partindex})-\probmeasureHzero(\region'_{\partindex})\probmeasureHone(\optcell_{\partindex}) \bigr\rvert \\
\begin{split}& \leq \frac{1}{\probmeasureHone(\optcell_{\partindex})} \Bigl\lvert P(A^{*}_i) \sum_{S\subseteq R_{i}'(B(A^{*}_i))}Q(S) \\
  &\quad + Q(A^{*}_i) \sum_{S\subseteq R_{i}'(B(A^{*}_i))}P(S) \Bigr\rvert\end{split} \\
&\leq \frac{1}{\probmeasureHone(\optcell_{\partindex})} \bigl\lvert \probmeasureHzero(\optcell_{\partindex}) \denubound \beta' 2^{-\RDPdepth} + \probmeasureHone(\optcell_{\partindex}) \denubound \beta' 2^{-\RDPdepth} \bigr\rvert \\
&= \denubound \beta' 2^{-\RDPdepth} \Bigl\lvert\frac{\probmeasureHzero(\optcell_{\partindex})+\probmeasureHone(\optcell_{\partindex})}{\probmeasureHone(\optcell_{\partindex}) } \Bigr\rvert, 
\end{align*}
where the second inequality follows from Lemma~\ref{lem:rui} with $\beta'=2^{2d}\beta$.

Summarizing, we have
\begin{align}
\sum_{\partindex=0}^{\asize-1} &\sum_{\cell\subseteq\region'_{\partindex}/B(\mapcopt)} \int_{\cell}\lvert\mapcopt(\inputsv)-\mapout'(\inputsv)\rvert~d\inputsv \nonumber\\
&\leq \frac{\denubound}{\denlbound} \beta' 2^{-\RDPdepth} \sum_{\partindex=0}^{\asize-1}\Bigl\lvert\frac{\probmeasureHzero(\optcell_{\partindex})+\probmeasureHone(\optcell_{\partindex})}{\probmeasureHone(\optcell_{\partindex}) } \Bigr\rvert \nonumber\\
& \leq \frac{C}{c}\Bigl(\frac{C}{c}+1\Bigr) \beta' \asize\, 2^{-\RDPdepth}, \label{equ:doublestar}
\end{align}
where the last step follows from the assumed bounds on $p$ and $q$.

Finally, by combining~\eqref{equ:rubbish}, \eqref{equ:approx99}, \eqref{equ:firstterm}, and \eqref{equ:doublestar}, we conclude
\begin{equation}
\norm{\mapcopt-\mapout'}_{L_{1}}\leq \beta' L \bigl[M+(C/c)(C/c+1)\bigr] \, 2^{-J}
\end{equation}
and
\begin{equation}\begin{split}
&\abbrefdiv{\mapcopt}-\abbrefdiv{\mapout'} \leq \\
& \qquad \Bigl(\frac{\denubound+\denlbound\lbound}{\lbound}\Bigr) \beta' L \bigl[M+(C/c)(C/c+1)\bigr] \, 2^{-J}
\end{split}\end{equation}
$\blacksquare$

\single
\bibliographystyle{IEEEbib}
\bibliography{lexa}

\end{document}